\newtheorem{theorem}{Theorem}
\newtheorem{definition}[theorem]{Definition}
\newtheorem{lemma}[theorem]{Lemma}
\newtheorem{example}[theorem]{Example}
\newtheorem{proposition}[theorem]{Proposition}
\newtheorem{protocol}{Protocol}
\newtheorem{remark}{Remark}
\begin{document}

\title{Linear spaces and transversal designs: k-anonymous combinatorial configurations for anonymous database search}

\author{Klara Stokes          \and Oriol Farr\`as 
}


\maketitle

\begin{abstract}
Anonymous database search protocols allow users to query  a database anonymously. 
This can be achieved by letting the users form a peer-to-peer community and post queries on behalf of each other. 
In this article we discuss an application of combinatorial configurations (also known as regular and uniform partial linear spaces) to a protocol for anonymous database search, as defining the key-distribution within the user community that implements the protocol.   
The degree of anonymity that can be provided by the protocol is determined by properties of the neighborhoods and the closed neighborhoods of the points in the combinatorial configuration that is used. 
Combinatorial configurations with unique neighborhoods or unique closed neighborhoods are described and we show how to attack the protocol if such configurations are used. 
We apply $k$-anonymity arguments and present the combinatorial configurations with $k$-anonymous neighborhoods and with $k$-anonymous closed neighborhoods. 
The transversal designs and the linear spaces are presented as optimal configurations among the configurations with $k$-anonymous neighborhoods and $k$-anonymous closed neighborhoods, respectively. 
\end{abstract}

\section{Introduction}
\label{intro}

\emph{Anonymous database search} is the discipline 
dedicated to the study of the anonymity of query searches in 
databases.
Namely, it is dedicated to
the study of protocols that allow a user to retrieve information 
from a database server without revealing for the server who he is. 
Anonymous database search has also been called 
\emph{user-private information retrieval} (UPIR) \cite{DoBr,DoBrWuMa,StBr_Optimal}. 
Another example of anonymous database search, similar to the previously mentioned UPIR protocols, is the protocol Crowds~\cite{crowds}. 
Unlike private information retrieval (PIR) protocols, UPIR protocols
do not hide the query for the database. 

Usually, when people access to a database server, 
the responsibility of guaranteeing their privacy is 
assigned to the database owner or to a trusted third party. 
In UPIR protocols, the privacy 
of the user is put in the hand of the user.

The protocols for UPIR presented in~\cite{DoBr,DoBrWuMa,StBr_Optimal,SwansonStinson}
are defined on a P2P community, and are called P2P UPIR.
In a P2P UPIR protocol, a community of P2P users agree to
collaborate in order to search the database anonymously.
The users send queries to the database on behalf of other 
users, without revealing to the database the identity of 
the owner of the query. In this way, the query profiles
of the users are diffused among the rest of the users in the community.
Moreover, the protocol distributes the queries uniformly
to avoid tracing of the queries. 

The queries are written and read by users on some memory sectors that are called communication spaces.
Each user has access only to some of these communication spaces. 
As a mean to hinder unauthorized entities from obtaining 
information about queries, 
the information on the communication spaces is encrypted. 

If the encryption of the information on the communication spaces
uses the same key over the entire network, 
then there is a high risk that the key is compromised. 
But if the number of keys is too large, then the users can have 
storage problems. 
Given a set of requirements on the protocol, the search for the 
optimal distribution of communication spaces and keys 
can be defined as a combinatorial problem with constraints.

In the articles \cite{DoBr,DoBrWuMa,StBr_Optimal,StBr_PAIS2011}, combinatorial configurations, 
also known as regular and uniform partial linear spaces, 
were used to manage the distribution of communications spaces and keys for P2P UPIR. 
These combinatorial structures have been used for
distributing keys also in other contexts, as in \cite{Stinson1}. 
In \cite{SwansonStinson} the P2P UPIR protocols were modified and 
extended to more general families of block designs. 

Curious users should be prevented from obtaining information about other users. 
It can be proved that this prevention is simplified if we avoid, for instance, 
designs in which a pair of users share 
more than one communication space, and designs in which a user can 
access all communication spaces. 
This motivates the use of combinatorial configurations. 

In this article, we apply $k$-anonymity arguments to the construction of anonymous database search algorithms, and present the combinatorial conﬁgurations with $k$-anonymous (open) neighborhoods and with $k$-anonymous closed neighborhoods. 
We study transversal designs and linear spaces and we show that they are optimal configurations. 
We present two versions of 
P2P UPIR protocols that are based
on the protocols presented in~~\cite{DoBr,DoBrWuMa,StBr_PAIS2011,SwansonStinson}. 
In one of the protocols (P2P UPIR 1), the user cannot forward directly his own
queries to the database, while in the other he is allowed to do so (P2P UPIR 2).

The P2P UPIR 2 protocol is a modification of P2P UPIR 1, 
designed in order to avoid so-called neighborhood attacks in some combinatorial configurations. 
These attacks were first described in \cite{StBr_PAIS2011},  and are based on query repetition from the P2P UPIR users, 
in combination with unique neighborhoods in the combinatorial configuration. 
As observed in \cite{SwansonStinson}, a neighborhood attack can be modeled as the intersection of neighborhoods, 
that may return a single identified point in the case of a unique neighborhood. 
In this article we give several examples of combinatorial configurations with unique neighborhoods. 

The results presented in this article, shows that it is not necessary for users to self-submit their queries in order to avoid neighborhood attacks. 
We present a family of combinatorial configurations with anonymous neighborhoods. 
We use the concept of $k$-anonymity to measure the degree of this anonymity, 
and say that a point has a $k$-anonymous neighborhood if it is the neighborhood of at least $k-1$ other points. 
Then we study and characterize the combinatorial configurations with $k$-anonymous neighborhoods. 
In particular, we justify why the transversal designs can be regarded as optimal among the configurations with $k$-anonymous neighborhoods for P2P UPIR 1. 
We also characterize the anonymity that is provided by the P2P UPIR 1 protocol, when combinatorial configurations with $k$-anonymous neighborhoods are used. 

As can be deduced from  \cite{StBr_Optimal}, the linear spaces are optimal configurations with respect to maximal diffusion of the query profiles. 
In \cite{SwansonStinson}, the linear spaces were presented as the only combinatorial configurations that can provide P2P UPIR with so-called \emph{perfect anonymity}, and this was extended to designs in general, among which the covering designs were distinguished for having the same property. 

In this article we show that the linear spaces have $k$-anonymous closed neighborhoods, 
and that they maximize the parameter $k$. 
This justifies once again why the linear spaces are optimal for P2P UPIR. 
We also construct a new class of combinatorial configurations that also have $k$-anonymous closed neighborhoods, but in which $k$ is smaller compared to the linear spaces.

This article is structured as follows. 
Section~\ref{sec:prel} contains the preliminaries, 
in Section~\ref{sec:P2PUPIR} we define the P2P UPIR protocols,  
Section~\ref{sec:notations} introduces notation and formalizes P2P UPIR in terms of data privacy. 
In Section~\ref{sec:compprot} we describe attacks on P2P UPIR and identify ways to avoid some of them. 
Section~\ref{sec:privP2PUPIR} discusses the nature of the privacy that can be provided by the P2P UPIR protocols and under what conditions.  
In Section~\ref{sec:unique_neighborhoods} we give examples of combinatorial configurations with unique neighborhoods. 
Sections~\ref{sec:combconfnanneighbors} and \ref{sec:nanonymousclosedneighborhoods} classify and give examples of combinatorial configurations with $k$-anonymous neighborhoods and $k$-anonymous closed neighborhoods, respectively. 
The article ends with conclusions.

\section{Preliminaries}
\label{sec:prel}
This section contains known results and concepts that will be used in the rest of the article. 

\subsection{Privacy}

Most of the definitions and the notation presented in the next paragraph are taken from \cite{Privacy}.

An \emph{adversary}, or an \emph{attacker}, is an entity that aims for the destruction of the privacy protection. 
A subject $s$ is \emph{anonymous} if the adversary cannot identify him within a set of subjects. 
We call this set of subjects the \emph{anonymity set} of $s$.
Two or more so-called items of interest are \emph{unlinkable} if within the system (comprising these and possibly other items), the adversary cannot sufficiently distinguish whether these items of interest are related or not.
In this article, an item of interest can be a query, a sequence of queries, the owner of a query, the owner of a sequence of queries, or the identity of this owner. 
We can express anonymity in terms of unlinkability; anonymity is provided if it is not possible to link a subject to the identity of this subject. 
\emph{Confidentiality} is the quality of being prevented from the disclosure of information to unauthorized individuals or systems. 

Disclosure risk control for statistical databases is a research area concerned with the protection of the privacy of individuals in published statistical databases~\cite{Josep2,Willenborg}.
The naive solution is to protect the database by simply removing the identifiers (e.g. name, ID, social security number) from the tables.
It is well-known that this solution is far from satisfactory. 
In many cases it is rather easy to recover the identifier of the anonymized record (see for example~\cite{Sweeney}). 
Other more sophisticated solutions for protecting databases have been proposed. 
Many of these solutions are methods for obtaining so-called $k$-anonymity, which we will define below. 

A \emph{database} is a collection of records of data.
We may assume that all records correspond to distinct individuals or objects.  
Every record has a unique identifier and is divided into attributes. 
The attributes can be very specific, as the attributes ``height'' or ``gender'', or more general, as the attributes ``text'' or ``sequence of binary numbers''. 

Suppose that the database can be represented as a single table. 
Let the records be the rows of the table and let the attributes be the columns. 
The intersection of a row and an attribute is a cell in the table, and we call the data in the cells the entries of the database.  
Also other data structures, like for example graphs, or in general, incidence structures,  
are representable in table form. 

Let $T$ be a table with the set of attributes $A$. 
Let $B\subseteq A$ be a subset of these attributes. 
We denote the projection of the table on the attributes $B$ by $T[B]$.
We suppose that every record contains information about a unique individual. 
An \emph{identifier} $I$ in a database is an attribute such that it uniquely identifies the individuals behind the records. 
In particular, any entry in $T[I]$ is unique. 
A \emph{quasi-identifier} $QI$ in the database is a collection of attributes $\{A_1,\dots,A_n\}$, 
such that they in combination can uniquely, or almost uniquely, identify a record \cite{Dalenius}.  
That is, the structure of the table allows for the possibility that an entry in $T[QI]$ is unique, or that there are only a small number of equal entries. 
In the former case the entry in $T[QI]$ uniquely identifies the individual behind the record and in the latter, the few other individuals with the same entries in $T[QI]$ may form a collusion and use secret information about themselves in order to make this identification possible.  

\begin{example}
If a table contains information on students in a school class, the attributes birth data and gender could be sufficient to determine to which individual a record of the table corresponds, although it is possible that not all records will be uniquely identified in this way. 
Hence for this table, birth date and gender are an example of a quasi-identifier.
\end{example}
The following definition of $k$-anonymity appeared for the first time in~\cite{ref:Samarati.Sweeney.1998} (see also~\cite{Sweeney}).

\begin{definition}\label{1:def:nan}
A table $T$, that represents a database and has associated quasi-identifier $QI$, is $k$-\emph{anonymous} if every sequence in $T[QI]$ appears with at least $k$ occurrences in $T[QI]$. 
\end{definition}

\subsection{Incidence Structures}
\label{sec:incidencegeometry}
An \emph{incidence structure} $(P,L,I)$ 
sometimes also called a \emph{block design}, 
consists of a \emph{point set} $P$, a family of subsets of points $L$ called \emph{block set},  
and an \emph{incidence relation} $I$ on $P$. 
The elements of $P$ and $L$ are called \emph{points} and \emph{blocks}, respectively,  and two points are related by the incidence relation if and only if there is a block containing both. In this article we assume that the same block only can appear once in $L$. 
We will also assume that the incidence structures are connected, 
so that for every two different points $p,q\in P$, 
there is a chain of incidences starting with $p$ and ending with $q$. 

If all blocks have the same number of points $k$, 
then we say that the incidence structure is \emph{$k$-uniform}.
If all points are in the same number of blocks $r$, 
then we say that the incidence structure is \emph{$r$-regular}. 
The \emph{order} of a uniform and regular incidence structure is the integer pair $(k-1,r-1)$.
 
A \emph{parallel class} $L'$ is a subset of $L$ such that for all $p\in P$ there is a unique block $l\in L'$ such that $p\in l$. Every parallel class $L'=\{l_1,\ldots,l_m\}$ is a partition of $P$, because 
$P=\cup_{i=1}^{m} l_i$ and $l_i\cap l_j=\emptyset$ for $1\leq i< j\leq m$. An incidence structure $(P,L,I)$ is \emph{resolvable} if there exists a partition of the set of blocks $L=\{L_1,\dots,L_s\}$,
such that $L_i$  is a parallel class of blocks for $1\leq i\leq s$. 

The \emph{line} spanned by two points is the intersection of the all blocks containing these points. 
When every pair of points is contained in at most one block, then the blocks are the lines. 

In an incidence structure in which the blocks are lines, we say that two points are \emph{collinear} if there is a line through the two points. 
Observe that a point is always collinear with itself. 
We define the \emph{closed neighborhood} $CN(p)$ of $p$  as the set of points that are collinear with the point $p$. 
The \emph{neighborhood} or \emph{set of neighbors} $N(p)$ of a point $p$ is the set of points in $C$ that are collinear with $p$ but different from $p$.  
Observe that $CN(p)=N(p)\cup\{p\}$. 

\subsubsection{Linear Spaces, Partial Linear Spaces, and Combinatorial Configurations}

A \emph{linear space} is an incidence structure in which every two points are in exactly one block,  
so in a linear space we may say that the blocks are lines.  
It is not required for the lines to have the same number of points, but the minimum number of points on a line is two. 

A \emph{finite affine plane} is an incidence structure in which  
\begin{itemize}
\item every two points span exactly one line, 
\item for every point $p$ and line $l$ not incident with $p$, there is exactly one other line $m\in L$ such that $p$ is incident with $m$ and $l\cap m=\emptyset$,
\item there is a triangle, a set of three points such that they pairwise span different lines. 
\end{itemize}
A finite affine plane is therefore a linear space. 
In a finite affine plane there is always a natural number $n$ such that there are $n$ points on every line, $n+1$ lines through every point, $n^2$ points and $n^2+n$ lines. 
In particular, a finite affine plane is uniform and regular. 
The order of a finite affine plane is $(n-1,n)$, but the usual notation is that the order is $n$.
The second condition in the definition of affine plane implies that the set of lines is partitioned into classes of parallel lines, so that an affine plane is resolvable. 


The affine plane over a finite field of order $q$ is always a finite affine plane of order $q$ which we denote by $\mathbb{A}(\mathbb{F}_q)$. 
From this it is deduced that there exists a finite affine plane of order $q$ for every prime power $q$. 
When $q$ is not prime, then there are other finite affine planes than $\mathbb{A}(\mathbb{F}_q)$, 
but it is not known if there are finite affine planes of order $n$ when $n$ is not a power of a prime. 
It is conjectured that there exists a finite affine plane of order $n$ if and only if $n$ is a power of a prime. 
 
Other examples of uniform and regular finite linear spaces are the finite projective planes, the unitals, and the Denniston designs. 

A \emph{partial linear space} is an incidence structure in which every two points can be on at most one line.
Also it is required that the minimum number of points on a line is two. 
All linear spaces are partial linear spaces. 
The number of points is usually denoted by $v$, and the number of lines by $b$.
In this article we will concentrate on $r$-regular and $k$-uniform partial linear spaces, also known as \emph{combinatorial $(v,b,r,k)$-configurations, } or shorter, $(r,k)$-configurations.  
%
%
For general references on combinatorial configurations, see \cite{Gropp,Grunbaum}. 

\subsubsection{$t$-Designs and Transversal Designs}
\label{sec:transversaldesigns}

Another interesting type of incidence structure are the $t$-designs. A $t$-\emph{design} with parameters $(v,k,\lambda)$ has $v$ points, $k$ points in every block and every $t$-element subset of points appear in exactly $\lambda$ blocks. 
A $2$-design with $\lambda=1$ is a combinatorial configuration, or more precisely, a regular linear space.  
The $2$-designs are also called \emph{balanced incomplete block designs} with parameters $(v,k,\lambda)$, or shorter, \emph{$(v,k,\lambda)$-BIBD}. 

In this article we will also treat a third type of incidence structure, the transversal designs. 

A \emph{transversal design} $TD_{\lambda}(k,n)$ is a $k$-uniform incidence structure $(P,L,I)$ with $|P|=kn$ that admits a partition  of $P$ whose parts, called \emph{groups}, have cardinality $n$, and satisfy the following properties:
\begin{enumerate} 
\item any group and any block contain exactly one common point, and
\item every pair of points from distinct groups is contained in exactly $\lambda$ blocks.
\end{enumerate}

In a transversal design the set of groups forms a partition of $P$ but it is not a parallel class since the groups are not blocks. 
On the other hand, if the block set $L$ can be partitioned into parallel classes, then we get a resolvable transversal design. 

A transversal design $TD_{\lambda}(k,n)$ is a combinatorial $(kn,n^2,n,k)$-configuration if and only if $\lambda=1$. In this article we are interested in the transversal designs of this kind. 
For simplicity of notation we will denote a transversal design with $\lambda=1$ by $TD(k,n)$. 
It is well-known that affine planes can be used to construct transversal designs as described in the following lemma. 

\begin{lemma}\label{1:lem:transaffine}
Whenever there exists a finite affine plane of order $n$, then for every $2\leq k\leq n$ there exists a transversal design $T(k,n)$. 
\end{lemma}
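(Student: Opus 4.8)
The plan is to construct the transversal design $T(k,n)$ directly from a finite affine plane $\mathbb{A}$ of order $n$ by deleting points. Recall from the preliminaries that $\mathbb{A}$ has $n^2$ points, $n^2+n$ lines, $n$ points on each line, $n+1$ lines through each point, and its line set is partitioned into $n+1$ parallel classes, each consisting of $n$ mutually disjoint lines that cover all $n^2$ points. First I would fix a parallel class $\mathcal{C}=\{m_1,\dots,m_{n+1}\}$ of lines through a common... more precisely, I would fix $n+1-k$ of the $n+1$ parallel classes is the wrong move; instead fix a single point $p$ is also not quite it. The cleanest route: pick $n-k$ parallel classes is still off by the incidence count. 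Let me state the construction I will actually use: fix one parallel class $\Pi$ of $\mathbb{A}$, say $\Pi = \{\ell_1,\dots,\ell_n\}$. I would then select $k$ of these $n$ parallel lines, call them $\ell_1,\dots,\ell_k$, let $P'$ be the union of the points on these $k$ lines, so $|P'| = kn$, and take as blocks of the new structure all lines of $\mathbb{A}$ other than $\ell_1,\dots,\ell_n$, each restricted to $P'$.

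The key steps are then: (1) verify that the $k$ chosen lines $\ell_1,\dots,\ell_k$ serve as the groups, giving a partition of $P'$ into $k$ parts of size $n$; (2) check that every remaining line $\ell$ of $\mathbb{A}$ meets each $\ell_i$ in exactly one point — this holds because $\ell$ and $\ell_i$ are non-parallel lines of a linear space, hence meet in exactly one point — so after restriction to $P'$ every block has exactly $k$ points, one per group, establishing both $k$-uniformity and property (1) of a transversal design; (3) check property (2): given two points $x,y$ in distinct groups $\ell_i,\ell_j$, the unique line of $\mathbb{A}$ through $x$ and $y$ is not in the parallel class $\Pi$ (since it meets two distinct lines of $\Pi$), hence it survives as a block and is the unique block through $x$ and $y$, so $\lambda = 1$. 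Finally I would count: the surviving blocks number $(n^2+n) - n = n^2$ after removing $\Pi$, but some of these become empty or need care — a line $\ell \notin \Pi$ meets each of $\ell_1,\dots,\ell_k$, so it retains $k \geq 2$ points and is a genuine block; there are $n^2+n-n = n^2$ such lines, matching the parameters of $T(k,n)$ as a $(kn, n^2, n, k)$-configuration, and regularity ($r=n$) follows since each point of $P'$ lies on $n+1$ lines of $\mathbb{A}$, exactly one of which (its member of $\Pi$) is deleted.

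I expect the main obstacle to be purely bookkeeping rather than conceptual: one must be careful that the restricted blocks are still distinct (two distinct lines of $\mathbb{A}$ cannot restrict to the same $k$-subset of $P'$, since they would then share $k \geq 2$ points, contradicting that $\mathbb{A}$ is a linear space) and that no block degenerates to fewer than two points (already handled above since $k \geq 2$). The hypothesis $k \leq n$ is exactly what guarantees we can pick $k$ of the $n$ parallel lines in $\Pi$; the hypothesis $k \geq 2$ guarantees the blocks are legitimate lines. Thus the lemma follows once these routine verifications are assembled.
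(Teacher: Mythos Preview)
Your proposal is correct and follows essentially the same construction as the paper: fix one parallel class of the affine plane, take the points on $k$ of its $n$ lines as the point set with those $k$ lines as groups, and take the remaining lines of the plane restricted to this point set as blocks. Your write-up is considerably more thorough than the paper's three-sentence proof in verifying the transversal-design axioms, distinctness of restricted blocks, and the parameter count, but the underlying idea is identical.
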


\begin{proof}
As point set $P$ of $TD(k,n)$, take the points on the $k$ lines from one of the parallel classes of an affine plane of order $n$. 
As the groups of $TD(k,n)$, take the lines from the same parallel class. 
As lines of $TD(k,n)$, take the lines in the rest of the parallel classes of the affine plane, restricted to the points in $P$. 
\end{proof}

More generally, it is well-known that the existence of transversal designs is related to the existence of a set of mutually orthogonal latin squares. For more information about these structures, see for example \cite{MOLS,PaBoSh}. 
%


\section{P2P UPIR: peer-to-peer protocols for anonymous database search} \label{sec:P2PUPIR}

In this section we describe the peer-to-peer protocols 
for user-private information retrieval (P2P UPIR), first presented in~\cite{DoBr,DoBrWuMa}.  
These protocols use communication spaces, that are memory sectors in which 
a user who has access to the corresponding cryptographic key can write and read queries and the answers to these queries. 
The distribution of the cryptographic keys is determined by a combinatorial configuration.
The clients of the protocol are mapped to the points of a combinatorial configuration, and
the keys, or the communication spaces, are mapped to the lines.
The result is that a client, represented by the point $p$, has the cryptographic keys giving access to the communication spaces that are represented by the lines through $p$.

\subsection{The P2P UPIR INIT protocol}
The P2P UPIR protocols described herein are called by a protocol that is implemented by all the community of users together. 
We call this protocol P2P UPIR INIT.
This protocol takes as parameter the combinatorial configuration for the distribution of communication spaces. 

By abuse of notation, we will not distinguish the points and the lines of the configuration from the users and the communication spaces they represent.
A communication space is a queue of messages, together with a cryptographic key from a symmetric cipher, used to encrypt the messages.

The precondition is here that a community $P$ of $n$ users wants to implement a P2P UPIR protocol. 
The postcondition is that some user has dropped out of the protocol.
\begin{protocol}[P2P UPIR INIT]~\label{1:def:p2pupirinit}
\begin{enumerate}
\item The users in $P$ are mapped to the points of the combinatorial configuration. 
\item The users repeat execution of the P2P UPIR protocol with frequency $f$ (which is not required to be constant, nor the same for all users). 
\end{enumerate}
\end{protocol}


\begin{remark}\label{rmk:freqcheck}
The protocol described in \cite{DoBr,DoBrWuMa} was different. 
For instance, the user repeated the P2P UPIR protocol only when they had a query to post. 
However, in order to limit the waiting time before a user can post his query, and the response time for the answer, the period of protocol repetition must be bounded. More exactly, $f$ should always be higher or equal to the highest query submission frequency among the users. 
\end{remark}

\begin{remark}
It is not necessary to end the P2P UPIR INIT protocol only because a user $p$ is temporally away. 
The owner of a query that should have been posted by $p$, can execute P2P UPIR again in order to get his query posted to the server. 
However, we think that only modest and controlled absences should be allowed for, since a prolonged absence causes the deterioration of the provided anonymity.
\end{remark}

\subsection{The P2P UPIR 1 protocol}
First we present a P2P UPIR protocol which is similar to the protocol described in \cite{DoBr,DoBrWuMa}, but modified following the ideas from \cite{SwansonStinson}. 
We will call this protocol P2P UPIR 1.
The individual $p$ is member of a community of user implementing the P2P UPIR INIT protocol, and $p$'s execution of P2P UPIR 1 is done within P2P UPIR INIT.
The user $p$ may, or may not, have a query $Q$ which he wants to post to the community.

\begin{protocol}[P2P UPIR 1]~
\begin{enumerate}
\item The user (point) $p$ selects uniformly at random a communication space (line) $l$ passing through $p$;
\item $p$ decrypts the content on $l$ using the corresponding cryptographic key.
The outcome is a queue of messages $M=(M_i)$.
For every message $M_i$ in the queue:
\begin{itemize}
\item If $M_i$ is a \textbf{query addressed to $p$}, then $p$ removes $M_i$ from the queue, forwards $M_i$ to the server, receives the answer $A$, encrypts $A$ and writes $A$ to the end of the queue $M$;
\item Else if $M_i$ is an \textbf{answer to a query belonging to $p$}, then $p$ reads $M_i$ and removes $M_i$ from the queue $M$; 
\item Else, $p$ leaves $M_i$ on the queue without action;
\end{itemize}
\item If $p$ has a query $Q$, then 
\begin{enumerate}
\item  $p$ selects uniformly at random a point $p'\neq p$ on $l$; 
\item  $p$ addresses $Q$ to $p'$ and writes $Q$ to the end of the queue $M$.
\end{enumerate}
\end{enumerate}
\end{protocol}

\subsection{The P2P UPIR 2 protocol}
We also present a variation of the former protocol which we will call P2P UPIR 2. This protocol was first described in~\cite{StBr_PAIS2011}.  
 
The P2P UPIR 2 protocol differs from the P2P UPIR 1 protocol only in how the users forward their own queries. 
We say that a user who forwards his own queries with probability $x$, has self-submission $x$. 
The P2P UPIR 2 protocol with self-submission $x$ is obtained from the P2P UPIR 1 protocol by replacing step 3 (a) by: \bigskip\\
\textit{3 (a') $p$ selects a point $p'$ on $l$; with probability $x$ he selects $p'=p$, else he selects uniformly at random $p'\neq p$ on $l$;}
\bigskip
\begin{remark}
As will be proved in Proposition~\ref{prop:selfsubmission}, the P2P UPIR 2 protocol should be executed with self-submission $x=\frac{1}{|CN(p)|}=\frac{1}{r(k-1)+1}$. 
\end{remark}

\section{Notations and formal framework for the analysis of P2P UPIR}
\label{sec:notations}


In this section we will define the formal framework in which the rest of the analysis will take place. 
\subsection{Queries}

Let $P$ be a community of users implementing an instance of the P2P UPIR protocol. 
For every user $p\in P$ we define the \emph{real query profile} $RP(p)$ as the temporal sequence of queries which $p$ posts to the communication spaces and 
 the \emph{apparent query profile} $AP(p)$ as  the temporal sequence of queries which the user posts to the server.
By extension we define the real query profile $RP(U)$ and the apparent query profile $AP(U)$ of a set of users $U\subseteq P$. 


A \emph{query} is a set of one or more search terms. 
A \emph{repeated query} is a query which occurs more than once in the real profile of a user. 
A \emph{repeated variation of a query} is a query posted by a user which is a slight modification of a previous query posted by the same user. 
The latter definition is vague and ambiguous, but still useful. 

We say that a profile is \emph{rare} if it contains many unique queries or unique combinations of queries and we say that it has repetition if it contains many repeated queries or repeated variations of queries.

\subsection{Disclosure control}
\label{sec:disclosurecontrolp2pupir}

In \cite{Josep1}, three types of privacy protection are distinguished, in function of the entity for whom the protection is provided: \emph{respondent privacy}, \emph{owner (holder) privacy} and \emph{user privacy}. 
The aim of the P2P UPIR protocol is to provide anonymous database search, which falls under the area of anonymous communication, or, following the notation in \cite{Josep1}, user privacy.  
However, in this article we choose to model it in the context of respondent privacy, as a method for disclosure control of databases. 
The database protected by P2P UPIR is then the collection of queries that the users of the protocol post to the server, or in other words, the real profiles of the users. 

There are two important differences between traditional  disclosure control for statistical databases (respondent privacy) and disclosure control for P2P UPIR (user privacy interpreted as respondent privacy): 
\begin{itemize}
\item Typically, in respondent privacy, the disclosure control is applied to a given database. 
The P2P UPIR protocol is however executed in real-time as the users post queries to the server, that is, as the information is introduced into the database. 
We may say that the P2P UPIR is a \emph{streaming} disclosure control method;
\item For respondent privacy, it is typically important to balance low disclosure risk with low information loss, 
since it is useless to publish a database without information. 
In the P2P UPIR protocol, the users anonymize the data they give to the server themselves, instead of leaving this task to the server. 
For the aim of P2P UPIR, there is no need to control the utility of the query profiles collected by the server. 
We will assume that the users have no interest in providing a useful statistical database.
\end{itemize}
Some users find useful some of the services provided by the server that are based on their query (or mail) profile. 
Also, typically, the server provides query searches for free, in exchange for the valuable information that is collected in the query profiles.  
We propose that the query profile should be maintained by the user himself, and provided to the server when so desired.
This approach would put the privacy of the user in the hand of the user, where it should be.

Consider a community of $v$ users $P$ implementing an instance of the P2P UPIR 1 protocol.  
Without loss of generality, we can limit the analysis to some time interval $t$. 
Then we note by $RP_t(P)$ and $AP_t(P)$ the profiles $RP(P)$ and $AP(P)$ restricted to $t$.
In this context, the P2P UPIR protocol is a transformation of the database which we will denote by
$$\begin{array}{rccl}
\rho:&D&\rightarrow &D\\\\
&RP_t(P)&\mapsto& AP_t(P),
\end{array} $$
where $D$ is the space of all possible query databases. 
 
The database $RP_t(P)$ is a table 
where the identifier is the user ID, 
and there is an attribute Q($t_i$) for every approximate time interval $t_i$, 
containing a single query posted to the server by the user approximately at time $t_i$, or a null entry.  

After applying $\rho$ to this table we obtain the transformed database $AP_t(P)$. 
The action of $\rho$ can be described as swapping the data in the column $Q(t_i)$, 
under the constraint that the content in the record $p$ can be replaced by the content in the record $q$ only if $p\in N(q)$.
Disclosure control methods of this type are called data swapping (first appearance in~\cite{DaleniusReiss}). 

Observe that $\rho$ also adds some noise to the time stamps of the queries. 
For example, the fact that $p$ posts first $Q_1$ and then $Q_2$ to the community of users, does not imply that $Q_1$ is posted before $Q_2$ to the server. 
Therefore, in order for the swapping to preserve columns, we should think of the queries in $AP_t(P)$ as sorted according to the time they are posted to the community of users, not according to the time they are posted to the server.

\section{Attacking P2P UPIR}
\label{sec:compprot}
In this section we will discuss attacks on P2P UPIR and some countermeasures.   

The purpose with the P2P UPIR protocol is to protect the privacy of the user when retrieving information from a server. 
Therefore our main concerns are attacks from the server, or adversaries that have characteristics similar to the server. 
We will also briefly consider attacks from other users.

\subsection{Neighborhood attacks on P2P UPIR 1}
\label{sec:attack1}
In the P2P UPIR 1 protocol the user forwards to the server only queries from collinear users different from himself. 

Consider a community of users implementing the P2P UPIR 1 protocol and suppose that the initialization protocol is given a combinatorial configuration such that there are points with unique neighborhoods. 
That is, there are points $p\in P$ such that $N(q)\neq N(p)$ for every $q\in P$, $q\neq p$. 
The users are mapped to the points in the combinatorial configuration, and a user $p$ will share communication spaces with the set of users $N(p)$, 
so that the users who post the queries in $RP(p)$ are the users in $N(p)$. 
Now suppose that the user $p$ repeats the same query over and over again. 
After a while, the probability that all users in $N(p)$ have posted the query will be high. 
Therefore, since we know that $p$ is the only user with the neighborhood $N(p)$, if the query is rare, then we will be able to link the query to the user $p$, and so the anonymity provided by the protocol is broken. 

The article \cite{StBr_PAIS2011} discussed the fact that it is very common that users of web-based search engines post the same or a slightly modified version of the same query several times. 
Other references on this subject are \cite{SpWoJaSa,Teevan}.

Examples of combinatorial configurations with unique neighborhoods
are provided in Section~\ref{sec:unique_neighborhoods} and combinatorial configurations with $n$-anonymous neighborhoods are discussed in Section~\ref{sec:combconfnanneighbors}.

\subsection{Adjusting query self-submission for P2P UPIR 2}

\label{sec:modification}

We just saw that the P2P UPIR 1 protocol, 
which is similar to the version of the P2P UPIR protocol that appears in \cite{DoBr,DoBrWuMa}, 
can be attacked if the configuration that is used has points with unique neighborhoods. 
Examples of combinatorial configurations with unique neighborhoods are the linear spaces (see Section~\ref{sec:unique_neighborhoods}). 

This is a problem, since otherwise the linear spaces are optimal configurations for P2P UPIR, 
if we consider the anonymity of the user in front of the server.
The use of a linear space maximizes the number of apparent profiles into which the real profile of a user is diffused, 
under the restriction that we keep the cardinality of the user community fixed. 
In particular, a linear space is the only type of combinatorial configuration in which, 
for all points $p$, the point set satisfies $P=N(p)\cup\{p\}=CN(p).$ 


We want to modify the protocol so that the use of linear spaces resists the attack described in the previous section. 
A first approach is to let the user $p$ forward also his own queries. In this way he will forward the queries from $CN(p)$. 
However, this implies that the users will forward to the server more of their own queries than queries of other users. 
Indeed, if the user $p$ for every line $l$ selects a point $p'$ on $l$ with equal probability, 
then $p$ will select $p'\neq p$ with probability $\frac{1}{rk}$, and himself with probability $\frac{1}{k}>\frac{1}{rk}$. 
The server can therefore infer the real profile of a user from his apparent profile. 
There will be partial protection of the privacy of the user in front of the server. 
But if we let the protocol run for a while in order to let the user post enough queries, 
then a user's real profile will be inferable from his apparent profile. 


A compromise between these two extremes is to let the user adjust the proportion of self-submission of queries so that his real profile results uniformly distributed over the apparent profiles of the users in $CN(p)$. 
This is the strategy employed by the P2P UPIR 2 protocol, as will be illustrated below. 

\begin{definition}
Let $p_0$ be a user in a P2P UPIR community. 
We say that $p_0$'s real query profile is uniformly and independently distributed over the apparent query profiles of a set of users $A$, if, for all queries $Q\in RP(p_0)$ and for all users $p\in A$, the events ``$p$ forwards $Q$ to the server'', have equal probability and are mutually independent. 
\end{definition}
A user $p_0$ in a community of users who are executing the P2P UPIR 1 protocol from the P2P UPIR INIT protocol, selects the proxy for every query uniformly at random from $N(p_0)$, and the choices are independent.  
It is therefore clear that the $RP(p_0)$ is uniformly and independently distributed over the apparent query profiles of $N(p_0)$. 
We will now see that we can adjust the self-submission in P2P UPIR 2 and achieve a uniform and independent distribution of $RP(p_0)$ over the apparent query profiles of $CN(p_0)$.
\begin{proposition}
\label{prop:selfsubmission}
Let $p_0$ be a user in a P2P UPIR 2 community. 
Then $p_0$'s real query profile is uniformly and independently distributed over the apparent query profiles of $CN(p_0)$, if and only if $p_0$'s probability of query self-submission is $\frac{1}{|CN(p_0|}=\frac{1}{r(k-1)+1}$. 
\end{proposition}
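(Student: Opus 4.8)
The plan is to compute, for an arbitrary query $Q \in RP(p_0)$ and an arbitrary user $p \in CN(p_0)$, the probability that $p$ is the one who forwards $Q$ to the server, and then impose that this probability be the same for all $p \in CN(p_0)$; independence across queries will follow from the fact that the random choices made by $p_0$ (and by the subsequent proxies) in distinct executions of the protocol are independent by construction. First I would recall the mechanics: when $p_0$ executes P2P UPIR 2 with a query $Q$, he first picks a line $l$ through $p_0$ uniformly at random among the $r$ lines through $p_0$; then by step 3(a$'$) he writes $Q$ addressed to $p' = p_0$ with probability $x$, and to a uniformly random $p' \neq p_0$ on $l$ with probability $1-x$ (so each of the $k-1$ other points of $l$ is chosen with probability $(1-x)/(k-1)$). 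The point $p'$ so selected is exactly the user who will forward $Q$ to the server (in P2P UPIR 1 and 2, an addressed query is forwarded by its addressee on that addressee's next execution). So I only need the distribution of $p'$.

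Next I would carry out the case analysis. If $p' = p_0$: this happens iff step 3(a$'$) yields self-submission, which has probability $x$ regardless of which line $l$ was chosen. If $p' = p \neq p_0$ with $p \in N(p_0)$: since the configuration is a partial linear space, $p_0$ and $p$ lie on a unique common line, so there is exactly one choice of $l$ (probability $1/r$) that puts $p$ on the selected line, and then $p$ is addressed with probability $(1-x)/(k-1)$; hence $\Pr[p' = p] = \frac{1}{r}\cdot\frac{1-x}{k-1}$. Now $CN(p_0) = N(p_0) \cup \{p_0\}$, and $|N(p_0)| = r(k-1)$ because each of the $r$ lines through $p_0$ contributes $k-1$ further points and (partial linear space again) these sets are disjoint; so $|CN(p_0)| = r(k-1)+1$. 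Uniformity over $CN(p_0)$ requires
\begin{equation*}
x = \frac{1}{r}\cdot\frac{1-x}{k-1},
\end{equation*}
which I would solve: $x\, r(k-1) = 1-x$, i.e. $x(r(k-1)+1) = 1$, i.e. $x = \frac{1}{r(k-1)+1} = \frac{1}{|CN(p_0)|}$. Conversely, plugging this $x$ back in makes all $r(k-1)+1$ probabilities equal to $\frac{1}{r(k-1)+1}$, and they sum to $1$, confirming the distribution is exactly uniform on $CN(p_0)$. For independence, I would note that different queries in $RP(p_0)$ are handled in separate protocol executions whose random choices (the line $l$ and the coin flip in 3(a$'$)) are drawn independently, so the forwarding events for distinct queries are mutually independent; this gives the "uniformly and independently distributed" conclusion in both directions.

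I do not expect a serious obstacle here; the only point that needs a little care is justifying that the person who eventually forwards $Q$ to the server is precisely the addressee $p'$ chosen by $p_0$ — i.e., that $Q$ is not re-addressed or re-routed by an intermediate user before reaching the server. This follows from reading the protocol: a message that is a query addressed to the current reader is removed and forwarded to the server (step 2, first bullet), and a query that is not addressed to the reader is left untouched (step 2, third bullet), so an addressed query travels to its addressee and no further. Once that is pinned down, the computation above is routine, and the characterization of $x$ drops out immediately.
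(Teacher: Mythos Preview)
Your proposal is correct and follows essentially the same route as the paper: compute, in terms of the self-submission probability $x$, the chance that each $p\in CN(p_0)$ is the forwarder of a fixed query, equate the value at $p_0$ (namely $x$) with the value at any neighbor (namely $\frac{1}{r}\cdot\frac{1-x}{k-1}$), and solve. The paper presents the same calculation, just with $x$ already set to $1/(r(k-1)+1)$ for the sufficiency direction; your version is in fact a bit tidier on the ``only if'' side and on the protocol-mechanics point that the addressee is indeed the eventual forwarder.
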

\begin{proof}
The set of users who forwards $RP(p_0)$ to the server is $CN(p_0)$.
It is clear that $RP(p_0)$ is  uniformly distributed over $CN(p_0)$ if the probability for any user in $CN(p_0)$ to forward any of $p_0$'s queries is  $1/|CN(p_0)|$.
In particular this implies that for $RP(p_0)$ to be uniformly distributed over $CN(p_0)$,  $p_0$ should have self-submission probability $1/|CN(p_0)|$.
We have $|CN(p)|=r(k-1)+1$ for all $p$.

We will now see that, for $RP(p_0)$ to be uniformly distributed over $CN(p_0)$,
it is sufficient that $p_0$ has self-submission probability $1/(r(k-1)+1)$.

Suppose $p_0$ has self-submission probability $1/(r(k-1)+1)$.
Let $Q$ be a query in $RP(p_0)$.
The probability that $Q$ is posted to the community is $1-\frac{1}{r(k-1)+1}$.
The queries in $RP(p_0)$ are distributed by $p_0$ over his communication spaces following a uniform distribution,
so the probability that $Q$ is posted to the communication space $l$ is $\frac{1}{r}\left(1-\frac{1}{r(k-1)+1}\right)=\frac{k-1}{r(k-1)+1}$.
There are $k-1$ other users than $p_0$ connected to $l$, and they are selected using a uniform distribution, 
so the probability that any particular user $p\in N(p)$ will read and forward $Q$ is
$\frac{1}{k-1}\frac{k-1}{r(k-1)+1}=\frac{1}{r(k-1)+1}=\frac{1}{|CN(p_0)|},$
which equals the probability that $p_0$ forwards $Q$.

The choices of communication space and user are independent, so we conclude that for every query $Q$ that $p_0$ posts to the community of users, the events ``$p$ forwards $Q$ to the server'' have equal probability for all $p\in CN(p)$ and that the choices of $p$ are all mutually independent.  
\end{proof}

From now on, we will always assume that the P2P UPIR 2 protocol is implemented with self-submission $1/|CN(p)|$, as indicated by Proposition~\ref{prop:selfsubmission}. 

\subsection{Closed neighborhood attacks on P2P UPIR 2}
\label{sec:attack2}

The P2P UPIR 2 protocol with self-submission $1/|CN(p)|$ avoids the attack described in Section~\ref{sec:attack1}, when the configuration that is used is a linear space. 
However, in general, for other combinatorial configurations, the P2P UPIR 2 protocol also presents weaknesses in case of repeated queries. 
The real query profile of $p$ is independently and uniformly distributed over the apparent profiles of $CN(p)$. 
If $p$ repeats a rare query enough, then this query can be linked to him whenever the set $CN(p)$ can. 

Examples of combinatorial configurations with unique closed neighborhoods
are provided in Section~\ref{sec:unique_neighborhoods} and we discuss combinatorial configurations with $n$-anonymous closed neighborhoods in Section~\ref{sec:nanonymousclosedneighborhoods}. 
\subsection{Other attacks}
\label{sec:attack_altres}
Swanson and Stinson described an attack on the P2P UPIR 2 protocol that was based on the intersection of closed neighborhoods~\cite{SwansonStinson}. 
Following \cite{StBr_PAIS2011}, they also use a repeated rare query or variation of query, say $Q$. 
Instead of focusing on the closed neighborhood of the real owner of $Q$,  their concern is the closed neighborhood of the proxy.
The attack consists in intersecting the closed neighborhoods of the users who act as proxy for the query $Q$.
The result of the attack is set of users containing the anonymity set of the origin of the query. 
If this set is small, we have reidentification. 
It is clear that there is an analogous attack on the P2P UPIR 1 protocol, intersecting the neighborhoods of the proxies. 

These attacks are performed by a curious server, just as the attacks in Section~\ref{sec:attack1} and \ref{sec:attack2}. 
It is easy to see that an intersection attack can take place if and only if the configuration that is used in the protocol has unique neighborhoods or closed neighborhoods, respectively.
Therefore, the two types of attacks are essentially the same. 

As observed by Swanson and Stinson, we can also consider intersection attacks in which the adversary is a user in the community. 
In this case $m$ proxies collude in order to find the origin of a sequence of $l$ linked queries. 
Swanson and Stinson use other incidence structures than configurations, 
where two points may appear together in more than one block. 
In this case, the proxies can intersect the blocks over which they received $Q$. 
If we use combinatorial configurations, this does not occur, so this is a strong reason for using combinatorial configurations in P2P UPIR.  

Below we briefly list other possible attacks:
\begin{itemize}
\item The adversary can reveal the underlying combinatorial structure, by introducing users owned by him in the community. This attack was briefly discussed in \cite{StBr_CRM};
\item The adversary can determine who is in the community, since these users will have very similar apparent query profiles, and this profile will differ from apparent query profiles of users outside the community.  
\end{itemize} 
\subsection{Discussion}

We want to point out that although P2P UPIR 2 allows for the use of linear spaces without risk for intersection attacks, 
and the linear spaces have neighborhoods of maximal cardinality, the original P2P UPIR 1 protocol is still slightly simpler in implementation. 
This would be even more so, if the self-submission was expressed as a proportion of the query profile. 
Because of its simplicity, the use of the P2P UPIR 1 protocol is still justified, 
if anonymity can be ensured.



\section{On the privacy provided by  P2P UPIR}
\label{sec:privP2PUPIR}
In this section we specify the type of privacy that can be attained using P2P UPIR. 
We also show which combinatorial configurations to use in order to attain this privacy.  
\subsection{n-Confusion for P2P UPIR}
We will use the notations on database disclosure control, introduced in Section~\ref{sec:notations}.
As commented there, there is no interest in preserving the utility of the database in the transformation. 
We are only interested in minimizing the disclosure risk. 
The best result would therefore be a protected database completely free from information. 
The P2P UPIR protocol can not achieve this, as single queries contain information and are indivisible.

The purpose of P2P UPIR is to cause confusion on who is the real sender of the query. 
It is useful to have a measure of the provided confusion. 
\begin{definition}
If the cardinality of the anonymity set for the owner of any sequence of linked queries (or query) is at least $n$, 
then we say that we have \emph{$n$-confusion.} 
In this case we say that we have \emph{$n$-confusing} P2P UPIR. 
\end{definition}

If it is known who is in the community of users $P$, 
then the confusion on who is the sender of a query (i.e. the cardinality of the anonymity set), 
cannot be larger than the set of users. 
In Section~\ref{sec:attack_altres} we saw that the server can see who is in the community, 
since these users will have similar apparent query profiles. 
Therefore, the best we can aspire for is a confusion of magnitude $n=|P|$ on who is the owner of a query. 
In general, we want to cause confusion on who is the owner of a sequence of queries, also when the sequence is linkable by content.
Also in this case, the obvious upper bound for the confusion is $n\leq |P|$. 

We are interested in achieving $n$-confusion also for $n<|P|$, 
if this can be justified by other advantages, for example, as in this article, if it permits us to use the simpler P2P UPIR 1 protocol, instead of the slightly more complicated P2P UPIR 2 protocol.

In the following example we see that $n$-confusion with $n>1$ can fail to be achieved by P2P UPIR, if the sequence of linked queries contains a quasi-identifier. 
Therefore, in this case, the owner of the query sequence will not be anonymous. 
\begin{example}
\label{ex:nonanonymity}
Consider a sequence $s$ of queries posted by a user $p$ that is linkable by content. 
Observe that this does not imply that $s$ is linkable to $p$. 
Suppose that the content of the queries in $s$ gives information for linking $s$ to $p$. 
Then the anonymity set of $s$ has cardinality one, so P2P UPIR cannot provide anonymity for $p$ with respect to $s$. 

\end{example}
In the following we will always assume that $AP_t(P)$ does not contain sequences of linkable queries with quasi-identifiers.  

Under this assumption, the privacy provided by P2P UPIR in case of sequences of linked queries is anonymity; we can link the queries but we cannot link them to their owner. 
The presence of sequences of queries that are linkable because of their content, obstructs unlinkability in P2P UPIR. 
However, if we assume that the adversary cannot use the query content for the analysis, 
then unlinkability can be provided for the queries. 
This situation may occur, for example, if the range of possible queries is small. 
More precisely, it occurs if there are no rare repeated queries.  
 
Either we have anonymity or we do not. 
Anonymity is therefore provided by P2P UPIR if the protocol satisfies $n$-confusion with $n>1.$
The anonymity of a user can be broken by a collusion of the $n-1$ other users in his anonymity set. 
Therefore it is interesting to maximize $n$.  

In the previous discussion we have always assumed that the information that is available to the adversary is the same information that is available to the server. 
We have anonymity also with respect to other users if the identity of the owner of a query, or a sequence query, is known only to this user.   
This occurs if traffic analysis is prohibited, 
there are no linked sequence of queries with quasi-identifier, 
and the number of users $k$ on every communication space is large. 
If $k=2$ then one of the two users know with certainty who is the owner of the query. 
In general, a collusion of $k-1$ users is needed to deduce the identity of the query owner. 
Therefore it is interesting to maximize $k$. 
 
On the other hand, if we assume that the adversary user can see the identity of the query owner, using for example traffic analysis, or if there are linked sequences with quasi-identifier, 
then it is interesting to break up the real query profiles in small parts, in order to provide some confidentiality. 
In this case it is therefore interesting to maximize $r$, the number of communication spaces per user. 

\subsection{n-Anonymity for P2P UPIR}

It is clear that the use of P2P UPIR does not imply that the resulting database $AP_t(P)$ is $n$-anonymous in the sense of Definition~\ref{1:def:nan}.
Indeed, $AP_t(P)$ will not in general have $n$ occurrences in $n$ different records of any sensitive sequence $s$. 

The attacks in Sections~\ref{sec:attack1} and \ref{sec:attack2}  suggest that there is a quasi-identifier present in $AP_t(P)$. 
For P2P UPIR 1 and P2P UPIR 2 this quasi-identifier is the set of neighborhoods and the set of closed neighborhoods, respectively.  

Formally, before transforming the database $RP_t(P)$ using the P2P UPIR protocol transformation $\rho$, 
we first add the attribute $N(p)$ (or $CN(p)$) of $p\in P$ to $RP_t(P)$. 
The attribute $N(P)$ (resp. $CN(P)$) is invariant for the action of $\rho$, 
which in particular means that $\rho$ preserves its quasi-identifying property.  
According to Definition \ref{1:def:nan}, 
in order to make $AP_t(P)$ $n$-anonymous with respect to this quasi-identifier, 
we have to ensure that every element of the set of neighborhoods (resp. the set of closed neighborhoods)  occurs at least $n$ times in $AP_t(P)$. 

\begin{definition}\label{def:nanconf1}
We say that a combinatorial configuration has $n$-\emph{anonymous neighborhoods} (resp. $n$-\emph{anonymous closed neighborhoods}), 
if every neighborhood (resp. closed neighborhood) of a point, 
is the neighborhood (resp. closed neighborhood) of at least $n$ points. 
\end{definition}

\begin{figure}

\begin{tabular}{ccc}
\resizebox{!}{0.22\textwidth}{\includegraphics{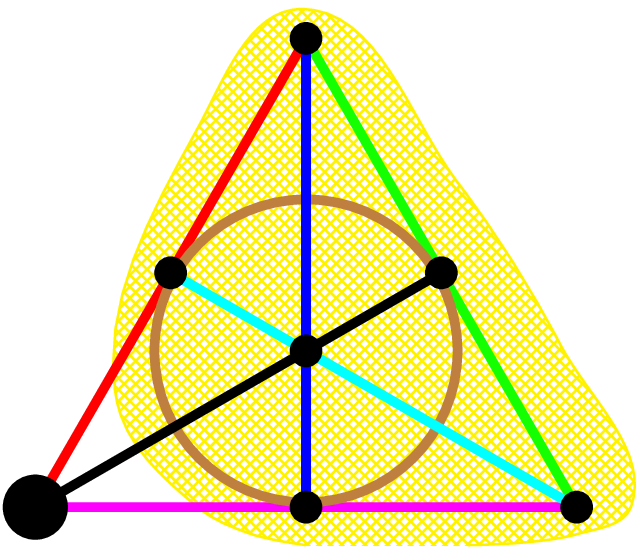}}
&
\resizebox{!}{0.22\textwidth}{\includegraphics{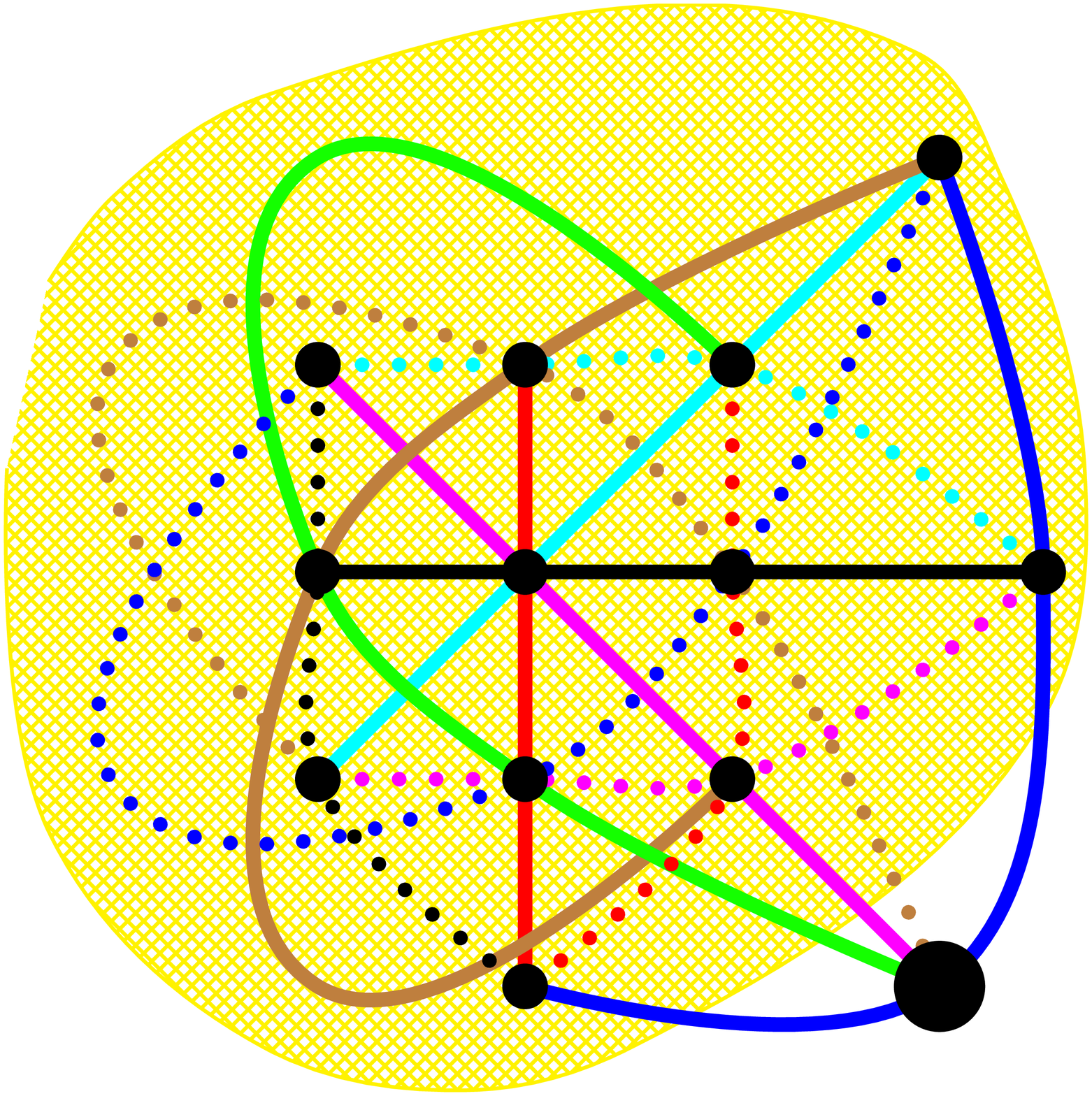}}
&
\resizebox{!}{0.22\textwidth}{\includegraphics{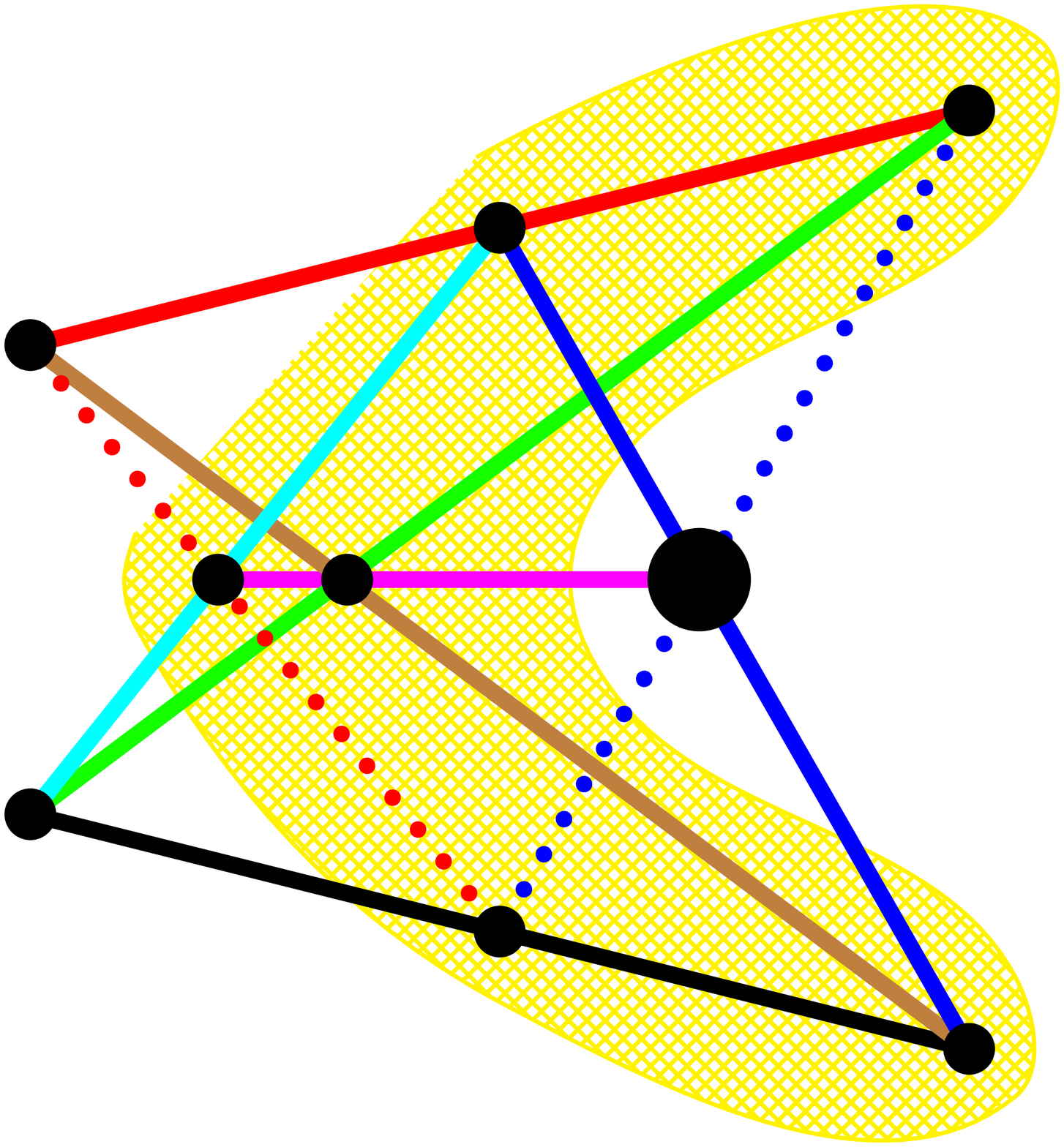}}
\\
The neighborhood
&
The neighborhood 
&
The neighborhood\\ 
of a point in & of a point in & of 3 points in \\
$\mathbb{P}(\mathbb{F}_2)$, the Fano plane& $\mathbb{P}(\mathbb{F}_3)$& the Pappus configuration
\end{tabular}
\end{figure}

\subsection{Achieving n-confusion using n-anonymous neighborhoods and n-anonymous closed neighborhoods}
\label{sec:nconfnan}
In this section we show that the P2P UPIR protocol can offer $n$-confusion, if we use the correct type of combinatorial configuration. 
\begin{proposition}
\label{prop:p2pnconf}
Under the assumption that queries can be linked by content, but there are no sequences of linked queries containing quasi-identifiers, the P2P UPIR 1 protocol (resp. the P2P UPIR 2 protocol) provides $n$-confusion if it is implemented with a combinatorial configuration with $n$-anonymous neighborhoods (resp. $n$-anonymous closed neighborhoods).

However, if the sequence of queries cannot be linked by content, then we have $n$-confusion with $n$ equal to the cardinality of the neighborhoods and the closed neighborhoods, respectively. 
\end{proposition}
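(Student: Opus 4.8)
The plan is to track a single rare repeated query (or a content-linked sequence of queries) owned by a user $p_0$ and to identify, from the server's point of view, the smallest set of users to which this query can be attributed; this set is precisely the anonymity set, and the claim is that its cardinality is at least $n$ under the stated hypotheses. First I would recall the analysis of Sections~\ref{sec:attack1} and~\ref{sec:attack2}: in the P2P UPIR 1 protocol the users who ever forward a query from $RP(p_0)$ to the server are exactly the users in $N(p_0)$, and in the P2P UPIR 2 protocol (with self-submission $1/|CN(p_0)|$, as fixed by Proposition~\ref{prop:selfsubmission}) they are exactly the users in $CN(p_0)$. So the best the server can do, given that it can recognise the whole set $N(p_0)$ (resp.\ $CN(p_0)$) via the repeated query, is to invert the map that sends a point to its neighborhood (resp.\ closed neighborhood). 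The anonymity set of the query is then $\{\,q\in P : N(q)=N(p_0)\,\}$ for P2P UPIR 1, and $\{\,q\in P : CN(q)=CN(p_0)\,\}$ for P2P UPIR 2.

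Next I would invoke the quasi-identifier formalism set up just before Definition~\ref{def:nanconf1}: the attribute $N(p)$ (resp.\ $CN(p)$) is invariant under the swapping transformation $\rho$, so it remains a quasi-identifier in $AP_t(P)$, and no further information about $p_0$ leaks through $\rho$ beyond this attribute — here is where the hypothesis that the linked sequence contains no quasi-identifier is used, since otherwise the query content itself would cut the anonymity set down (cf.\ Example~\ref{ex:nonanonymity}). Therefore the anonymity set is exactly the block of users sharing $p_0$'s value of this quasi-identifier. By Definition~\ref{def:nanconf1}, if the configuration has $n$-anonymous neighborhoods then every value of $N(\cdot)$ is attained by at least $n$ points, so $|\{q : N(q)=N(p_0)\}|\geq n$; likewise for closed neighborhoods in the P2P UPIR 2 case. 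This gives $n$-confusion, establishing the first assertion.

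For the second assertion, suppose the queries cannot be linked by content. Then the server cannot aggregate the forwarding events of different queries of $p_0$ into a profile, and all it observes for a given query $Q\in RP(p_0)$ is the single proxy $p$ who forwarded it. By the uniform-and-independent distribution established in Proposition~\ref{prop:selfsubmission} (and the analogous elementary fact for P2P UPIR 1, stated in Section~\ref{sec:modification}), every user in $N(p_0)$ (resp.\ $CN(p_0)$) is equally likely to have been that proxy, and no user outside can have been; hence the anonymity set of $Q$ is all of $N(p_0)$ (resp.\ $CN(p_0)$), of cardinality $r(k-1)$ (resp.\ $r(k-1)+1$), which depends only on the order of the configuration and not on $p_0$. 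This yields $n$-confusion with $n$ equal to this common cardinality.

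The main obstacle I expect is being fully rigorous about the claim ``the server learns nothing about $p_0$ beyond the value of the quasi-identifier'': one must argue that, after enough repetitions of the rare query, the server can with overwhelming probability recover the full set $N(p_0)$ (resp.\ $CN(p_0)$) — the coupon-collector-type argument sketched in Section~\ref{sec:attack1} — but also that it can recover \emph{no more} than this set, i.e.\ that the swapping $\rho$ does not correlate $p_0$'s identity with any other observable. Pinning down this ``no more'' direction, rather than the easier ``at least this much'' direction, is the delicate point; everything else is a direct appeal to Definition~\ref{def:nanconf1} and Proposition~\ref{prop:selfsubmission}.
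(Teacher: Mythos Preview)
Your approach is essentially the paper's. For the linked-query case you take the anonymity set to be $\{q: N(q)=N(p_0)\}$, while the paper writes it as $\bigcap_{p\in N(p_0)} N(p)$; in an $r$-regular configuration these coincide, since $q\in\bigcap_{p\in N(p_0)} N(p)$ iff $N(p_0)\subseteq N(q)$ iff $N(q)=N(p_0)$ by equality of cardinalities, so the two formulations are equivalent and the appeal to Definition~\ref{def:nanconf1} finishes it just as in the paper.

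One small slip in your second assertion: the anonymity set, from the server's viewpoint, is the neighborhood $N(p)$ of the \emph{observed proxy} $p$, not $N(p_0)$ of the unknown owner --- the server sees $p$ and asks which users could have addressed a query to $p$, and that set is $N(p)$. The paper phrases it this way. Since $|N(p)|=r(k-1)=|N(p_0)|$ in any $(r,k)$-configuration, your numerical conclusion is unaffected, but the direction of the inference should be reversed.
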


\begin{proof}
We will prove the result for P2P UPIR 1. The proof for P2P UPIR 2 is analogous. 
 
Following the notation in Section~\ref{sec:disclosurecontrolp2pupir}, we apply the P2P UPIR 1 protocol $\rho$ to the database $RP_t(P)$ and obtain $AP_t(P)$.  
Suppose that the adversary is allowed to analyze the content of the queries and is able to correctly link a sequence $s$ of queries to each other, as having the same origin, say, the user $p_0$.    
We have assumed that $s$ does not contain a quasi-identifier, which could identify $p_0$ as the origin of $s$, by content alone. 

Because of the properties of $\rho$, all queries in $s$ will be in the records of $AP_t(P)$ that correspond to $N(p_0)$. 
The anonymity set of $s$ is the intersection of the neighborhoods of the neighbors of user $p_0$, that is, $\bigcap_{p\in N(p_0)}N(p).$  
If the combinatorial configuration has $n$-anonymous neighborhoods, then this intersection has cardinality at least $n$, so we have $n$-confusion. 

If query sequences cannot be linked by content, 
then the anonymity set for the owner of a single query is the neighborhood of the proxy of the query. 
This anonymity set has cardinality $r(k-1)$, so in this case we have $n$-confusion with $n=r(k-1)$. 
\end{proof}

\section{Combinatorial configurations with unique neighborhoods or unique closed neighborhoods}
\label{sec:unique_neighborhoods}
In this section we give examples of combinatorial configurations that have unique neighborhoods or unique closed neighborhoods for all points. 
We saw in Section~\ref{sec:compprot} that such configurations should be avoided for the use in P2P UPIR~1 and P2P UPIR~2 respectively. 

We provide examples of combinatorial configurations with 
\begin{itemize}
\item anonymous neighborhoods but unique closed neighborhoods (as the combinatorial configurations with deficiency one),
\item unique neighborhoods but anonymous closed neighborhoods (as the linear spaces), and 
\item unique neighborhoods and unique closed neighborhoods (as the pentagonal geometries without opposite line pairs). 
\end{itemize}

\subsection{Examples of combinatorial configurations with unique neighborhoods}

In this section we give examples of combinatorial configurations with unique neighborhoods.
\begin{proposition}
The linear spaces have unique neighborhoods. 
\end{proposition}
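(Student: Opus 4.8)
The plan is to show that in a linear space, the neighborhood $N(p)$ of a point $p$ uniquely determines $p$, so that no two distinct points can have the same neighborhood. The key structural fact is that a linear space is a combinatorial configuration in which every pair of points lies on a (unique) line; consequently, any two distinct points are collinear. This means that for every point $p$ we have $CN(p) = P$ and hence $N(p) = P \setminus \{p\}$. From this explicit description, the neighborhood immediately recovers the point: $\{p\} = P \setminus N(p)$.

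Concretely, I would proceed as follows. First, recall from the definition of a linear space that for any two distinct points $q, q' \in P$ there is a (unique) line containing both, so $q$ and $q'$ are collinear; combined with the convention that every point is collinear with itself, this gives $CN(p) = P$ for all $p \in P$, and therefore $N(p) = CN(p) \setminus \{p\} = P \setminus \{p\}$. Second, suppose for contradiction that two distinct points $p, q \in P$ satisfy $N(p) = N(q)$. Then $P \setminus \{p\} = P \setminus \{q\}$, which forces $p = q$, a contradiction. Hence every point has a unique neighborhood, i.e., $N(p) \neq N(q)$ whenever $p \neq q$, which is precisely the assertion that the linear space has unique neighborhoods.

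There is essentially no hard step here: the only subtlety is making sure the definitions line up — that ``linear space'' in the sense used in the excerpt indeed means every pair of points lies on a common block, and that the notion of ``collinear'' (and hence of neighborhood) used in the statement is the one defined for incidence structures whose blocks are lines, which is the case for linear spaces since every pair of points is on at most one block. Once this is pinned down, the proof is the one-line observation $N(p) = P \setminus \{p\}$. If one wants to emphasize the connection to later results, one can note in passing that this same computation shows $CN(p) = P = CN(q)$ for all $p, q$, so linear spaces have $|P|$-anonymous closed neighborhoods even though their (open) neighborhoods are all distinct — foreshadowing the dichotomy between the two protocols.
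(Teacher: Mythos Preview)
Your proof is correct and follows essentially the same approach as the paper: both argue that in a linear space $N(p) = P \setminus \{p\}$, from which uniqueness of neighborhoods is immediate. Your version is more detailed (spelling out $CN(p)=P$ and the contradiction argument) and adds the nice observation about $|P|$-anonymous closed neighborhoods, but the underlying idea is identical.
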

\begin{proof}
In a regular linear space every pair of points is collinear.  
Therefore, for any point $p$, the neighborhood $N(p)$ is all the point set except for $p$, so that $p$ is the only point with neighborhood $N(p)$. 
\end{proof}

A \emph{triangle} in a combinatorial configuration is a set of three distinct points such that they are pairwise collinear on three distinct lines. 
A combinatorial configuration is \emph{triangle-free} if it has no triangles. 
\begin{proposition}\label{2.1:thm:id_conf}
A triangle-free combinatorial configuration, not a graph, has unique neighborhoods.   
\end{proposition}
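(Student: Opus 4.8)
The plan is to show that in a triangle-free combinatorial configuration which is not a graph (i.e. whose lines have $k \geq 3$ points), every point is determined by its neighborhood, so that no two distinct points can share the same neighborhood. First I would unpack what triangle-freeness gives us locally. Fix a point $p$ and let $l_1, \dots, l_r$ be the $r$ lines through $p$. Each line $l_i$ contributes $k-1$ points to $N(p)$, and by triangle-freeness no two of these points (from the same or different lines through $p$) can be collinear on a line avoiding $p$: if $x \in l_i$ and $y \in l_j$ with $x,y \neq p$ were collinear on a line $m \neq l_i, l_j$, then (using that two points lie on at most one line, so $i\neq j$ forces $x\neq y$ and $m$ exists at most once) $\{p, x, y\}$ would be a triangle on $l_i, l_j, m$. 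Moreover two points on the same line $l_i$ are collinear only via $l_i$ itself. Hence the only lines meeting $N(p)$ in two or more points are exactly $l_1, \dots, l_r$, and these are the "long" induced structure on $N(p)$.

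Next I would recover $p$ from $N(p)$. Suppose for contradiction that $q \neq p$ has $N(q) = N(p)$. Since $q$ is collinear with each point of $N(q) = N(p)$, in particular $q$ is collinear with the $k-1 \geq 2$ points of $l_1 \setminus \{p\}$. Consider two such points $x, y \in l_1$, $x \neq y$: they already lie on the common line $l_1$, and $q$ is collinear with both, so $q$ lies on a line through $x$ and a line through $y$. If $q \notin l_1$, then the line $qx$ and the line $qy$ are distinct (they meet $l_1$ in distinct points $x,y$) and together with $l_1$ they form a triangle $\{q, x, y\}$, contradiction. Therefore $q \in l_1$. Running the same argument with $l_2$ (which also has $k-1 \geq 2$ points other than $p$, all in $N(p) = N(q)$) forces $q \in l_2$ as well. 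But then $p$ and $q$ are two distinct points lying on both $l_1$ and $l_2$, contradicting the partial linear space axiom that two points determine at most one line. Hence no such $q$ exists, and $N(p)$ is unique.

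I would then remark that the hypothesis "not a graph" is exactly what makes the argument work: it guarantees $k - 1 \geq 2$, so that each line through $p$ leaves at least two of its points inside $N(p)$, which is what lets us pin $q$ down to a line and then derive the double-incidence contradiction. If $k = 2$ the configuration is a graph and the claim genuinely can fail.

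The main obstacle I anticipate is bookkeeping about which pairs of points in $N(p)$ are collinear and via which lines — one must be careful to use both triangle-freeness and the partial-linear-space condition (at most one line per pair of points) at the right moments, and to handle the degenerate possibility $x = y$ when two "different" lines through $p$ are considered. Once the local picture "the only multiply-covered lines inside $N(p)$ are the lines through $p$" is established cleanly, the uniqueness of $p$ follows quickly from the two-lines-through-two-points contradiction. A secondary point to state carefully is that connectedness and regularity are not actually needed for uniqueness of a single neighborhood, only $k \geq 3$ and triangle-freeness; but since the statement is phrased for combinatorial configurations, I would simply work within that setting.
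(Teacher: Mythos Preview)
Your proof is correct and rests on the same core triangle argument as the paper's, but the final contradiction is reached differently. The paper argues that if $N(p_0)=N(p_3)$ then \emph{no} two points of $N(p_0)$ can be collinear (otherwise either $\{p_0,p_1,p_2\}$ or $\{p_3,p_1,p_2\}$ is a triangle, or both $p_0,p_3$ lie on the line $p_1p_2$ and are hence collinear), and then observes that $k\geq 3$ forces collinear pairs inside $N(p_0)$. You instead push $q$ onto each line $l_i$ through $p$ and invoke the partial-linear-space axiom on $l_1\cap l_2$. Both are fine; the paper's route yields the pleasant intermediate fact that $N(p_0)$ is an independent set, while yours is more direct.

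One simplification you may want to note: your two-line argument is more than you need. Once you have forced $q\in l_1$, you already know $q$ is collinear with $p$, hence $p\in N(q)=N(p)$, which is impossible. This also removes the implicit use of $r\geq 2$ (the existence of $l_2$), which you did not address; with the one-line version the only hypothesis actually used is $k\geq 3$.
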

\begin{proof}
Let $C=(P,L,I)$ be a triangle-free $(r,k)$-configuration with $k>2$ (so that it is not a graph). 
Fix a point $p_0\in P$ and let $p_1, p_2\in N(p_0)$ be two points collinear with $p_0$. 
Let $p_3\in P$ be a point such that $N(p_0)=N(p_3)$. 
Then $p_3$ is collinear with $p_1$ and $p_2$, but not with $p_0$, 
so that there is no line through all the four points $p_0$, $p_1$, $p_2$ and $p_3$. 
Therefore, $p_1$ and $p_2$ can not be collinear, because if they were, then at least one of the triples $p_0$, $p_1$, $p_2$ or $p_1$, $p_2$, $p_3$ would form a triangle. 
In other words, no pair of points in $N(p_0)=N(p_3)$ is collinear. 
Therefore the number of points on every line in $C$ is $k=2$, because if $k>2$, then there would be at least one pair of collinear points $p,q\in N(p_0)=N(p_3)$.  
We deduce that, whenever $k>2$,  given a point $p_0\in P$ there is no point $p_3\in P$ distinct from $p_0$ such that $N(p_0)=N(p_3)$. 
\end{proof}

A \emph{pentagonal geometry} is a combinatorial configuration in which, for any point $p$, all points not in the closed neighborhood of $p$, are on the same line~\cite{BaBaDeSt}. 
This line is called the \emph{opposite line} $p^{opp}$ of $p$.
\begin{proposition}
A pentagonal geometry has unique neighborhoods. 
\end{proposition}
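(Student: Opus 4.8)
The statement to prove is that a pentagonal geometry has unique neighborhoods, i.e., for every point $p_0$ there is no point $p_3 \neq p_0$ with $N(p_3) = N(p_0)$. The natural approach is to argue by contradiction and extract information from the defining axiom of a pentagonal geometry: for each point $p$, the points outside $CN(p)$ all lie on a single line $p^{opp}$. First I would suppose, for contradiction, that $p_0 \neq p_3$ and $N(p_0) = N(p_3)$. Since a configuration is a partial linear space with minimum line size two and is connected, $p_0$ has at least one neighbor; the equality $N(p_0)=N(p_3)$ forces $p_3 \notin N(p_0)$ (a point is never in its own neighborhood) and likewise $p_0 \notin N(p_3)$, so $p_0$ and $p_3$ are non-collinear, i.e., $p_3 \notin CN(p_0)$ and $p_0 \notin CN(p_3)$.

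The key step is to compare the two opposite lines $p_0^{opp}$ and $p_3^{opp}$. Since $p_3 \notin CN(p_0)$, we have $p_3 \in p_0^{opp}$, and symmetrically $p_0 \in p_3^{opp}$. Now take any point $q$ not in $CN(p_0)$; then $q \in p_0^{opp}$. From $N(p_0) = N(p_3)$ and the fact that $p_0 \neq q$, $p_3 \neq q$, one gets that $q \notin CN(p_3)$ as well: indeed $q \notin N(p_0) = N(p_3)$ and $q \neq p_3$, so $q$ is not collinear with $p_3$. Hence $q \in p_3^{opp}$. This shows $p_0^{opp} \setminus \{p_3\} \subseteq p_3^{opp}$ and, by symmetry, $p_3^{opp} \setminus \{p_0\} \subseteq p_0^{opp}$. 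Since each opposite line, being a line in a configuration, has at least two points, the two lines $p_0^{opp}$ and $p_3^{opp}$ share at least two common points (for instance $p_3 \in p_0^{opp}$ together with any further point of $p_3^{opp}$ forced into $p_0^{opp}$), which in a partial linear space forces $p_0^{opp} = p_3^{opp}$. But then $p_0 \in p_3^{opp} = p_0^{opp}$, contradicting the fact that $p_0^{opp}$ consists only of points \emph{not} collinear with $p_0$, while $p_0$ is collinear with itself. This contradiction completes the argument.

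I would expect the main obstacle to be handling the small-cardinality edge cases cleanly: one must make sure the opposite lines genuinely have two points (guaranteed by the configuration axioms), and one must be careful that the inclusions between $p_0^{opp}$ and $p_3^{opp}$ really do produce two distinct common points rather than collapsing trivially — this is where the hypotheses $k \geq 2$ on line size and $p_0 \neq p_3$ are used. An alternative, perhaps slicker, route is to observe directly that the map $p \mapsto CN(p)$ determines $p$ in a pentagonal geometry: since $P \setminus CN(p) = p^{opp}$ as a \emph{set} with a distinguished line structure, and $p$ is recoverable as the unique point whose non-neighbors form exactly the line $p^{opp}$; but spelling this out still reduces to the same comparison of opposite lines, so I would present the contradiction argument above as the cleanest path.
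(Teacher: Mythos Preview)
Your contradiction strategy is sound in spirit, but the step where you claim $p_0^{opp}$ and $p_3^{opp}$ share at least two common points is not correctly justified, and in fact fails when $k=2$. You write ``for instance $p_3 \in p_0^{opp}$ together with any further point of $p_3^{opp}$ forced into $p_0^{opp}$'', but $p_3$ is \emph{not} a common point: $p_3 \in CN(p_3)$, so $p_3 \notin p_3^{opp}$. What your inclusions actually give is $p_0^{opp}\setminus\{p_3\}\subseteq p_0^{opp}\cap p_3^{opp}$, which has size at least $k-1$. For $k\geq 3$ this yields two common points and your argument goes through; for $k=2$ (which is a legitimate case---the pentagon itself is a pentagonal geometry) you only obtain one common point, and the two lines need not coincide in a partial linear space.

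The fix is short and, once made, essentially collapses to the paper's argument. Pick any $x\in p_0^{opp}\setminus\{p_3\}$ (this exists since $|p_0^{opp}|\geq 2$). By your inclusion, $x\in p_3^{opp}$; but $p_0\in p_3^{opp}$ as well, so $x$ and $p_0$ lie on the common line $p_3^{opp}$ and are therefore collinear, contradicting $x\in p_0^{opp}$. You never need $p_0^{opp}=p_3^{opp}$. The paper reaches the same contradiction by a direct case split on whether $q\in p^{opp}$: if not, then $q\in N(p)\setminus N(q)$; if so, take $x\neq q$ on $p^{opp}$ and observe $x\in N(q)\setminus N(p)$ since $x$ and $q$ are collinear along $p^{opp}$. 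Your route via equal neighborhoods and opposite-line inclusions is a legitimate alternative framing, but the detour through ``two common points force equal lines'' is unnecessary and is exactly where your argument breaks.
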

\begin{proof}
Let $p\neq q$ be two points in the pentagonal geometry. 
Suppose that $q$ is not on $p^{opp}$. Then $p$ and $q$ are collinear, so $q\in N(p)$. But $q\not\in N(q)$, and we deduce that $N(p)\neq N(q)$.
Now suppose that $q$ is on $p^{opp}$. We have that $p$ is not on $p^{opp}$, so $p^{opp}\neq q^{opp}$. 
Let $x\neq q$ be a point on $p^{opp}$. Then $x\in N(q)$, but $x\not\in N(p)$, so $N(p)\neq N(q)$.   
\end{proof}

\subsection{Combinatorial configurations with unique closed neighborhoods}
In this section we give examples of combinatorial configurations with unique closed neighborhoods. 

The parameters of any combinatorial $(v,b,r,k)$-configuration satisfies the inequality $v\geq r(k-1)+1$. 
We have the equality $v=r(k-1)+1$ if and only if we have a linear space.  
The \emph{deficiency} of a combinatorial $(v,b,r,k)$-configuration is the number $v-(r(k-1)+1)$. 
\begin{proposition}
\label{prop:def1}
A combinatorial configuration with deficiency one has unique closed neighborhoods. 
\end{proposition}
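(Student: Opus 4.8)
The plan is to show that if a combinatorial $(v,b,r,k)$-configuration $C=(P,L,I)$ has deficiency one, i.e. $v = r(k-1)+2$, then for every point $p$ the closed neighborhood $CN(p)$ determines $p$ uniquely. First I would recall the size count: since $C$ is a combinatorial configuration (so any two points lie on at most one common line), the $r$ lines through $p$ are pairwise meeting only in $p$, each contributes $k-1$ points other than $p$, and these point sets are disjoint; hence $|N(p)| = r(k-1)$ and $|CN(p)| = r(k-1)+1$. With deficiency one we have $v = r(k-1)+2$, so $|CN(p)| = v-1$, meaning exactly one point of $P$ is \emph{not} collinear with $p$. Denote this unique non-neighbor by $p^{*}$; equivalently $P \setminus CN(p) = \{p^{*}\}$.

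Now suppose $CN(p) = CN(q)$ for two points $p,q \in P$. Taking complements in $P$, this forces $p^{*} = q^{*}$, i.e. $p$ and $q$ have the same unique non-neighbor. The key step is then to argue that this cannot happen unless $p=q$. I would argue by contradiction: assume $p \neq q$. Since $q \in CN(q) = CN(p)$, the points $p$ and $q$ are collinear, say on a line $\ell$. Pick any third point $x$ on $\ell$ (possible because $k \geq 2$; if $k = 2$ one needs a small separate remark, but in that degenerate case the configuration is a graph and the statement must be checked directly — I expect the intended setting has $k \geq 3$, or one simply notes $|CN(p)|=v-1$ already pins down $p$ whenever lines have at least two points, since two distinct points with the same closed neighborhood of size $v-1$ would need the same omitted point, and then $p$ itself, lying in $CN(q)$, is collinear with $q$, etc.). The cleanest route: the relation ``$p$ is collinear with $q$'' is reflexive and, on the set $P \setminus \{p^*\}$, the hypothesis $CN(p)=CN(q)$ says $p$ and $q$ see exactly the same points; but $p \in CN(p)$ while we must check $p \in CN(q)$, which holds, and symmetrically $q \in CN(p)$ — so $p,q$ are collinear. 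Then consider the common non-neighbor $p^* = q^*$ and the line through $p$ and $q$: every point of that line is in $CN(p)\cap CN(q)$, which is consistent, so no contradiction arises yet from collinearity alone.

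The main obstacle, and where the real content lies, is ruling out two \emph{distinct} collinear points $p,q$ with a common unique non-neighbor. I would handle this by a counting argument on the line $p^*{}^{\mathrm{opp}}$-type structure: the point $p^*$ is joined to $p$ by no line, hence $p^*$ lies on none of the $r$ lines through $p$; but $p^*$ lies on $r$ lines of its own, and each such line must meet $CN(p) = P\setminus\{p^*\}$ in its remaining $k-1$ points, so the lines through $p^*$ partition $P \setminus \{p^*\}$ into $r$ blocks of size $k-1$ — exactly mirroring how the lines through $p$ partition $P\setminus\{p^*\} = N(p)\cup\{p\}$. If also $p^* = q^*$ with $q \neq p$, then $q \in N(p)$, so $q$ lies on one of the $r$ lines through $p$, say $m$; now every point of $m$ other than... — and here the contradiction should emerge from double-counting the lines through $p^*$ versus the collinearity of $q$ with all of $CN(p)\setminus\{q\}$: $q$ would need to be collinear with the $k-1$ points of $m$ minus $q$ \emph{via the line $m$ itself}, consistent, but $q$ must also be collinear with every other point of $CN(p)$, forcing $q$'s $r$ lines to cover all of $P\setminus\{p^*,q\}$, i.e. $q$ has the same partition property as $p$. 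Two points $p \neq q$ whose line-partitions of $P\setminus\{p^*\}$ both refine correctly while $p,q$ are on a common line $m$ leads to $m$ being counted in both partitions with incompatible block structure unless $v$ is too small; spelling this out should close the argument. I expect the clean finish is: $CN(p)=CN(q)$ and $p\ne q$ force, for any point $x\ne p,q$ on the line $pq$, that $x$'s non-neighbor equals both $p$'s and $q$'s — but $x$ is collinear with $p$ and with $q$, so $x^* \neq p$ and $x^* \neq q$, fine — iterating over all points and comparing with the deficiency-one equality $v = r(k-1)+2$ yields the contradiction. I would write the final version as the short complement argument (same $CN$ $\Rightarrow$ same omitted point) together with the observation that a point is always in its own closed neighborhood, and conclude that if $p \neq q$ shared a closed neighborhood then $p$ would be its own non-neighbor's neighbor — pinning down the remaining case by the line-partition count above.
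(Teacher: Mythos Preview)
Your setup is correct: with deficiency one, $|CN(p)|=v-1$, so each point $p$ has a unique non-neighbor $p^{*}$, and $CN(p)=CN(q)$ forces $p^{*}=q^{*}$. But from that point on you are missing the one-line observation that finishes the proof, and the counting argument you sketch instead is both unnecessary and left incomplete (you write ``spelling this out should close the argument'' and ``I expect the clean finish is\dots'' without actually closing it).

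The missing idea is simply that the map $p\mapsto p^{*}$ is an involution. Collinearity is a symmetric relation, so $p^{*}\notin CN(p)$ is equivalent to $p\notin CN(p^{*})$. Since $p^{*}$ also has exactly one point outside its closed neighborhood, that point must be $p$; hence $(p^{*})^{*}=p$. Now if $p^{*}=q^{*}$, apply the involution to both sides to get $p=(p^{*})^{*}=(q^{*})^{*}=q$. That is the entire argument in the paper: ``anti-podal points come in pairs.'' No line-partition counting, no case analysis on $k$, no contradiction argument is needed. Your long detour through partitions of $P\setminus\{p^{*}\}$ by the lines through $p$ and through $q$ never reaches a contradiction as written, and indeed cannot easily, because the obstruction is not a counting one but a purely relational one (symmetry of non-collinearity).
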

\begin{proof}
In a combinatorial configuration with deficiency one, for any point $p$ there is only one point in the complement of $CN(p)$, the \emph{anti-podal point} of $p$. 
The anti-podal points come in pairs, so every point has a unique anti-podal point and therefore, a unique closed neighborhood. 
\end{proof}
From the proof of Proposition~\ref{prop:def1} we also deduce that combinatorial configurations with deficiency one have 2-anonymous neighborhoods; any point and its anti-podal point share neighborhood. 

It can be proved that if two points $p$ and $q$ in a pentagonal geometry share the same opposite line $l$, 
then all points in $l$ will have the same opposite line: the line spanned by $p$ and $q$. 
Such a pair of lines is called an opposite line pair.  
\begin{proposition}
A pentagonal geometry with no pair of opposite lines has unique closed neighborhood. 
\end{proposition}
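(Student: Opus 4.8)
A pentagonal geometry with no pair of opposite lines has unique closed neighborhood.

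The plan is to...

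The plan is to proceed by contradiction and reduce the equality of two closed neighborhoods to the equality of the corresponding opposite lines, after which the fact about opposite line pairs recalled immediately before the statement finishes the argument.

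Concretely, I would assume that $CN(p)=CN(q)$ for two distinct points $p$ and $q$ and derive a contradiction. First, since $p\in CN(p)=CN(q)$, the points $p$ and $q$ are collinear, so the line through $p$ and $q$, which I denote $\langle p,q\rangle$, is well defined. Next I would pass to complements in the point set $P$: the equality $CN(p)=CN(q)$ gives $P\setminus CN(p)=P\setminus CN(q)$. In a pentagonal geometry the set of points not in the closed neighborhood of a point is exactly the point set of its opposite line, so $p^{opp}=q^{opp}$; call this common line $l$ (it is a genuine line with at least two points, being a line of the configuration). Thus $p$ and $q$ are two distinct points sharing the opposite line $l$.

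Now I would invoke the fact stated just before the proposition: because $p$ and $q$ share the opposite line $l$, every point of $l$ has opposite line $\langle p,q\rangle$, and hence $\{l,\langle p,q\rangle\}$ is an opposite line pair. This contradicts the hypothesis that the pentagonal geometry has no pair of opposite lines. Therefore $CN(p)\neq CN(q)$ whenever $p\neq q$, i.e., the pentagonal geometry has unique closed neighborhoods.

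The argument is short; the only step that requires care — and the only place where something could go wrong — is the identification of $P\setminus CN(p)$ with the point set of $p^{opp}$, that is, the use of the definition in the strong form that the opposite line consists \emph{exactly} of the points not collinear with $p$, rather than merely that those points happen to lie on some common line. With that reading of the definition (the same one implicitly used in the proof of the preceding proposition on unique neighborhoods), the equality $CN(p)=CN(q)$ transfers verbatim to $p^{opp}=q^{opp}$, and the remainder is a direct application of the quoted fact about opposite line pairs.
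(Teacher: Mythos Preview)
Your argument is correct and is essentially the paper's own proof, just spelled out in more detail. The paper simply observes that the point set of $p^{opp}$ is the complement of $CN(p)$, and then notes that the absence of opposite line pairs forces all opposite lines to be distinct (via the quoted fact), hence all closed neighborhoods are distinct; your version unpacks the same chain of implications by contradiction and makes the collinearity of $p$ and $q$ explicit.
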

\begin{proof}
For any point $p$ in the pentagonal geometry, the set of points on $p^{opp}$ is the complement of $CN(p)$. 
If the pentagonal geometry has no opposite line pair, then all points have unique opposite lines, hence unique closed neighborhoods.
\end{proof}

\section{Combinatorial configurations with n-anonymous neighborhoods}
\label{sec:combconfnanneighbors}
In Section~\ref{sec:nconfnan} we saw that combinatorial configurations with $n$-anonymous neighborhoods are interesting for use with P2P UPIR 1. 
\subsection{Examples of combinatorial configurations with \\n-anonymous neighborhoods}
Here we give an important example of a family of combinatorial configurations with $n$-anonymous neighborhoods. 
\begin{proposition}\label{2.3:thm:transnan}
A transversal design $TD(k,n)$ has $n$-anonymous neighborhoods.
\end{proposition}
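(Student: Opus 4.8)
The plan is to analyze the closed neighborhood, and hence the neighborhood, of a point in a transversal design $TD(k,n)$ directly from the defining properties, and then identify exactly which points share a given neighborhood. Let $(P,L,I)$ be a $TD(k,n)$ with groups $G_1,\dots,G_k$, each of cardinality $n$. Fix a point $p$, say $p\in G_i$. First I would compute $CN(p)$: by property (2), every point lying in a group $G_j$ with $j\neq i$ is collinear with $p$ (there is a block, in fact since $\lambda=1$ a unique block, through $p$ and that point), so $\bigcup_{j\neq i} G_j \subseteq CN(p)$. Conversely, two distinct points of the same group $G_i$ cannot be collinear: by property (1) any block meets $G_i$ in exactly one point, so no block contains two points of $G_i$. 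Hence $CN(p) = P\setminus(G_i\setminus\{p\}) = \{p\}\cup\bigcup_{j\neq i}G_j$, and therefore $N(p) = CN(p)\setminus\{p\} = \bigcup_{j\neq i} G_j$, which depends only on the group $G_i$ containing $p$, not on $p$ itself.

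The key observation is then immediate: for any other point $q\in G_i$, the same computation gives $N(q) = \bigcup_{j\neq i} G_j = N(p)$. So every point of $G_i$ has the same neighborhood, and $|G_i| = n$. This shows every neighborhood is the neighborhood of at least $n$ points, which is exactly the assertion that $TD(k,n)$ has $n$-anonymous neighborhoods in the sense of Definition~\ref{def:nanconf1}. (One should note that $TD(k,n)$ with $\lambda=1$ is a combinatorial $(kn,n^2,n,k)$-configuration, as recalled in the preliminaries, so the statement is being applied to an honest combinatorial configuration; I would just cite that fact rather than reprove it.)

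I do not expect a serious obstacle here — the whole argument is a short deduction from the two axioms of a transversal design. The one point that needs a little care is the claim that two points in the same group are never collinear: this uses property (1) (each block meets each group in exactly one point), and I would state it explicitly rather than treat it as obvious, since "collinear" here means "lying on a common line/block." It is also worth remarking, for the later discussion of optimality, that $n$ is the *exact* anonymity parameter — points in different groups have different neighborhoods (if $p\in G_i$ and $q\in G_j$ with $i\neq j$, then $q\in N(p)$ but $q\notin N(q)$, so $N(p)\neq N(q)$) — but this sharper statement is not needed for the proposition as stated and can be deferred.
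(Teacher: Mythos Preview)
Your proof is correct and follows essentially the same approach as the paper: use the group partition of the transversal design, observe (via property~(1)) that points in the same group are not collinear while (via property~(2)) points in different groups are, and conclude that all $n$ points of a group share the same neighborhood. Your version is simply more explicit in computing $N(p)=\bigcup_{j\neq i}G_j$ and in justifying the non-collinearity of same-group points, but the underlying argument is identical.
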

\begin{proof}
The point set of the transversal design can be partitioned into $k$ groups of cardinality $n$, such that the points in the same group are not collinear. 
Any pair of points from different groups is contained in exactly one line. 
This implies that the $n$ points in the same group all have the same neighborhood. 
\end{proof}
The transversal design $TD(k,n)$ in this construction is a combinatorial $(nk,n^2,n,k)$-configuration. 
Hence the construction provides a combinatorial configuration with $n$-anonymous neighborhoods that is suitable for $nk$ P2P UPIR users and requires the use of $n^2$ communication spaces. 
 

As we saw in \ref{sec:transversaldesigns}, transversal designs can be constructed using latin squares and many transversal designs can be easily constructed using affine planes. 
A transversal design constructed from an affine plane of order $q$  has parameters $(q^2,q^2,q,q)$. 
The use of the affine plane of order $2$ gives an ordinary square with 4 points and 4 lines with 2 points on every line. 
The use of the affine plane of order 3 gives the Pappus configuration.
\subsection{A characterization of the combinatorial configurations with n-anonymous neighborhoods}
We will now characterize the combinatorial configurations with $n$-anonymous neighborhoods exactly. 
\begin{proposition}\label{thm:charnanonym}
A combinatorial $(v,b,r,k)$-configuration with $n$-anonymous neighborhoods is a combinatorial configuration that satisfies the following conditions:
\begin{itemize}
\item There exists a partition $G=\{g_i\}_{i=1}^{m}$ of the point set such that the points in the same part are not collinear, and  $|g_i|\geq n$ for all $i\in [1,\dots,m]$;
\item We have that $r\geq n$ and $m\geq k$. 
\end{itemize}
\end{proposition}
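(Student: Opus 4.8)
The plan is to establish a bijection-style equivalence between the combinatorial structure of the partition into equi-neighborhood classes and the stated conditions, handling both directions of the characterization.

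First I would argue the "only if" direction. Suppose $C=(P,L,I)$ is a combinatorial $(v,b,r,k)$-configuration with $n$-anonymous neighborhoods. Define a relation on $P$ by $p\sim q$ iff $N(p)=N(q)$. This is clearly an equivalence relation; let $G=\{g_i\}_{i=1}^m$ be its classes. By the $n$-anonymity hypothesis, each class has $|g_i|\geq n$. The key point is that two points in the same class cannot be collinear: if $p\neq q$ are collinear, then $q\in N(p)$, but $q\notin N(q)$ (a point is never in its own neighborhood, by definition of $N$), so $N(p)\neq N(q)$, a contradiction. Hence $G$ is a partition of $P$ into pairwise non-collinear classes of size at least $n$. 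For the inequalities: fix any point $p$ with class $g_i$, and pick $q\in g_i$ with $q\neq p$ (possible since $|g_i|\geq n\geq 2$ — note $n\geq 2$ is forced, else the statement is vacuous, and in fact $n$-anonymity with $n\geq 1$ combined with non-triviality gives this; I would state $n\ge 1$ suffices and the interesting case is $n\ge 2$). Then every line $l$ through $p$ contains $k\geq 2$ points, none of which lies in $g_i$ (since points of $g_i$ are mutually non-collinear), and these $k-1$ non-$p$ points on $l$ all lie in $N(p)=N(q)$, so they are collinear with $q$. Counting the $r$ lines through $p$, each contributes $k-1$ distinct neighbors of $p$ (distinct because in a partial linear space two lines through $p$ meet only in $p$), so $|N(p)|=r(k-1)$. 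To see $r\geq n$: the points of $g_i\setminus\{p\}$ (there are at least $n-1$ of them) each lie on a line with $p$... actually the cleaner route is $m\geq k$: take a line $l=\{p_1,\dots,p_k\}$; the $p_j$ lie in $k$ pairwise distinct classes (pairwise collinear points cannot share a class), so $m\geq k$. For $r\geq n$: fix a class $g_i$ and a point $q\notin g_i$; since points of $g_i$ are mutually non-collinear but any two points from different classes... hmm — here I need that $q$ is collinear with every point of $g_i$, which requires knowing $N(p)=N(q')$ forces a common neighborhood structure. The honest derivation: fix $p\in g_i$ and $q\in g_i$, $q\neq p$; the lines through $p$ are $r$ in number, each hitting $k-1\geq 1$ points of $N(p)$; since all points of $g_i\setminus\{p\}\subseteq N(p)\cup\{p\}$? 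No — points of $g_i$ need not be in $N(p)$. Let me reconsider: the right statement is that $g_i$ (minus $p$) sits inside the complement... Actually for $r\geq n$ I would instead argue: pick any point $x\notin g_i$ that is collinear with $p$; then $x\in N(p)=N(q)$ for all $q\in g_i$, so $x$ is collinear with all $n$ points of $g_i$; these $n$ collinearities use $n$ distinct lines through $x$ (distinct since points of $g_i$ are mutually non-collinear, so no single line contains two of them), hence $r\geq n$.

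Second, I would argue the "if" direction, which is nearly immediate: given a partition $G=\{g_i\}_{i=1}^m$ into pairwise non-collinear classes with $|g_i|\geq n$, together with $r\geq n$ and $m\geq k$, I claim any two points in the same class have equal neighborhood, whence $C$ has $n$-anonymous neighborhoods. Indeed, let $p,q\in g_i$ and let $x\in N(p)$. Then $x\notin g_i$ (as $g_i$ is an independent set and $x$ is collinear with $p\in g_i$). Consider the line $l$ through $p$ and $x$. Since $l$ has $k$ points and at most one lies in each class... this is where the block structure of a configuration arising from a partition into transversal-like classes is needed. The cleanest hypothesis to invoke is that in such configurations every line meets $g_i$ in at most one point, and — crucially — the way these configurations are built (mirroring the transversal design construction in Proposition~\ref{2.3:thm:transnan}) forces every line through $p$ to also pass "near" $q$. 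I suspect the intended argument is simply: the configuration with this partition is, by definition/construction, one in which the $n$ points of a class are interchangeable, exactly as in the transversal design case, so the statement is that these two conditions are not just necessary but also a valid \emph{defining} set of conditions — i.e. the proposition is a structural characterization rather than a deep theorem.

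The main obstacle I anticipate is the "if" direction and, within the "only if" direction, the clean derivation of $r\geq n$: making precise why a point outside a class $g_i$ that is collinear with one point of $g_i$ must be collinear with all of them. This hinges on the equality $N(p)=N(q)$ for $p,q$ in the same equivalence class, which is true by construction when the classes are \emph{defined} as equi-neighborhood classes, but for an \emph{arbitrary} partition into independent sets of size $\geq n$ it need not hold — so I expect the precise statement being characterized is the one where the partition is exactly the equi-neighborhood partition, and the two bulleted conditions are consequences. I would therefore write the proof with the equivalence relation $p\sim q\iff N(p)=N(q)$ front and center, derive the partition and both inequalities from $n$-anonymity, and for the converse note that the listed conditions, read as a construction recipe (independent classes, enough lines, enough classes), reproduce precisely a configuration in which each class is an equi-neighborhood class of size $\geq n$.
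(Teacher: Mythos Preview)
Your ``only if'' direction is essentially correct and matches the paper's proof: define the equivalence relation $p\sim q\iff N(p)=N(q)$, observe that equivalent points cannot be collinear (else $q\in N(p)\setminus N(q)$), derive $m\ge k$ because the $k$ points on any line lie in $k$ distinct classes, and derive $r\ge n$ by picking a neighbor $x$ of some $p\in g_i$ and noting that $x\in N(p)=N(q)$ for every $q\in g_i$, forcing at least $|g_i|\ge n$ distinct lines through $x$. Your final version of the $r\ge n$ argument (after the meandering) is exactly what the paper does; the paper's proof is nothing more than this.

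The genuine gap is your ``if'' direction: there is none to prove. Despite the word ``characterization'' in the section heading, the proposition as stated and as proved in the paper is a one-way implication --- $n$-anonymous neighborhoods implies the two bulleted conditions --- and the paper makes no attempt at a converse. Your own suspicion is correct: an arbitrary partition of $P$ into independent sets of size $\ge n$ (together with $r\ge n$, $m\ge k$) does \emph{not} force points in the same part to have equal neighborhoods. A simple counterexample is any $r$-regular bipartite graph, viewed as an $(r,2)$-configuration, that is not complete bipartite: the bipartition gives two independent classes, yet vertices in the same class need not share a neighborhood. So your attempt to argue sufficiency by appealing to ``interchangeability'' or to a transversal-design-like construction cannot work, and you should simply drop that half of the argument. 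The proposition records necessary structural consequences, not a defining set of conditions.
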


\begin{proof}

Let $C=(P,L,I)$ be a combinatorial configuration with $n$-anonymous neighborhoods. 
Then every point $p\in P$ shares its neighborhood $N(p)$ with $n-1$ other points. 
``Having the same neighborhood'' is a binary relation which is obviously
\begin{itemize}
\item reflexive ($p$ has the same neighborhood as $p$);
\item symmetric (if $N(p)=N(q)$ then $N(q)=N(p)$);
\item transitive (if $N(p_1)=N(p_2)$ and $N(p_2)=N(p_3)$, then \\$N(p_1)=N(p_3)$). 
\end{itemize}
So it is an equivalence relation and defines a partition $G=\{g_1,\dots,g_m\}$ of the point set, in which $|g_i|\geq n$ for all $g_i\in G$. 
We will call the parts $g_i\in G$ groups.  
The neighborhood $N(p)$ of the point $p$ is defined as the set of points that are collinear with $p$, and different from $p$. 
In particular, if two points $p$ and $q$ satisfy $N(p)=N(q)$, 
then they are not collinear, since if they were, then $p\in N(q)$ which would imply $p\in N(p)$. 
Therefore points in the same group are not collinear.

For the bound on $r$, consider a pair of collinear points $p$ and $q$. 
Let $g$ be the group containing $p$. 
All points in $g$ have the same neighborhood, so $q\in N(p')$ for every $p'\in g$. 
No line contains two points in $g$, and we deduce that there are at least $|g|\geq n$ lines through $q$, so that $r\geq n$. 

Regarding the number of points on every line $k$, we see that, since points in the same group are not collinear, it is clear that any line contains $k$ distinct points from $k$ distinct parts of $G$, so that $k\leq m$.  
\end{proof}

There are indeed, $n$-anonymous combinatorial configurations which are not transversal designs. 
\begin{example}
\label{ex:3an}
Consider the combinatorial $(36,72,6,3)$-configuration with point set $P=\{1,\dots,36\}$ and line set as in Table~\ref{table:conf}.  

\begin{table}
\caption{Line set of a the combinatorial $(36,72,6,3)$ configuration in Example~\ref{ex:3an}.}
\label{table:conf}
\begin{center}
\begin{tabular}{llll}
$\begin{array}{c}
\{\{1,4,7\},\\
\{1,5,8\},\\
\{1,6,9\},\\
\{2,4,8\},\\
\{2,5,9\},\\
\{2,6,7\},\\
\{3,4,9\},\\
\{3,5,7\},\\
\{3,6,8\},\\
\{1,10,13\},\\
\{1,11,14\},\\
\{1,12,15\},\\
\{2,10,14\},\\
\{2,11,15\},\\
\{2,12,13\},\\
\{3,10,15\},\\
\{3,11,13\},\\
\{3,12,14\},\end{array}$
&
$\begin{array}{c}
\{4,16,19\},\\
\{4,17,20\},\\
\{4,18,21\},\\
\{5,16,20\},\\
\{5,17,21\},\\
\{5,18,19\},\\
\{6,16,21\},\\
\{6,17,19\},\\
\{6,18,20\},\\
\{7,22,25\},\\
\{7,23,26\},\\
\{7,24,27\},\\
\{8,22,26\},\\
\{8,23,27\},\\
\{8,24,25\},\\
\{9,22,27\},\\
\{9,23,25\},\\
\{9,24,26\},\end{array}$
&
$\begin{array}{c}
\{10,28,31\},\\
\{10,29,32\},\\
\{10,30,33\},\\
\{11,28,32\},\\
\{11,29,33\},\\
\{11,30,31\},\\
\{12,28,33\},\\
\{12,29,31\},\\
\{12,30,32\},\\
\{13,16,34\},\\
\{13,17,35\},\\
\{13,18,36\},\\
\{14,16,35\},\\
\{14,17,36\},\\
\{14,18,34\},\\
\{15,16,36\},\\
\{15,17,34\},\\
\{15,18,35\},\end{array}$
&
$\begin{array}{c}
\{19,22,31\},\\
\{19,23,32\},\\
\{19,24,33\},\\
\{20,22,32\},\\
\{20,23,33\},\\
\{20,24,31\},\\
\{21,22,33\},\\
\{21,23,31\},\\
\{21,24,32\},\\
\{25,28,34\},\\
\{25,29,35\},\\
\{25,30,36\},\\
\{26,28,35\},\\
\{26,29,36\},\\
\{26,30,34\},\\
\{27,28,36\},\\
\{27,29,34\},\\
\{27,30,35\}\}\end{array}$
\end{tabular}
\end{center}
\end{table}
It is clear that this combinatorial $(36,72,6,3)$-configuration is $3$-anonymous, but $k=3<12=m$ and $r=6>3=n$. 
We also observe that $rk=18$ divides $v=36$ and $b=72$.
The groups in the partition are given by Table~\ref{table:groups}.
\begin{table}
\caption{The partition of the point set into anonymity sets of the combinatorial$(36,72,6,3)$-configuration with 3-anonymous neighborhoods in Example~\ref{ex:3an}.}
\label{table:groups}
 \begin{center}
\begin{tabular}{ccccc}
$\{\{1,2,3\},$&$\{4,5,6\},$&$\{7,8,9\},$&$\{10,11,12\},$\\
$\{13,14,15\},$&$\{16,17,18\},$&$\{19,20,21\},$&$\{22,23,24\},$\\
$\{25,26,27\},$&$\{28,29,30\},$&$\{31,32,33\},$&$\{34,35,36\}\}$
\end{tabular}\end{center}
\end{table}
 
\end{example}


\subsection{Optimal configurations for $n$-anonymous P2P UPIR 1}

The privacy provided to the users of $n$-anonymous P2P UPIR 1 is $n$-confusion, where $n$ is the cardinality of the anonymity sets. 
The points in the same anonymity set have the same neighborhood. 
As we saw in Proposition~\ref{prop:p2pnconf}, if query sequences cannot be linked by content, then we have $r(k-1)$-confusion, since this is the cardinality of the neighborhood of the proxy of a single query. 
It is therefore interesting to maximize both $n$ and $r(k-1)$, although which one is the most important may depend on the context. 
 
In a combinatorial configuration with $n$-anonymous neighborhoods, the anonymity set and the neighborhood of a point are disjoint. 
Therefore, maximizing $n$ and $r(k-1)$ simultaneously is the same as requiring $v=n+r(k-1)$, so that the anonymity set and the neighborhood together form the entire point set of the configuration. 
It is easy to see that a transversal design satisfies this condition, 
since a point $p$ in a transversal design is neighbor with all points that are not in the anonymity set of $p$.  

Indeed, if we in Lemma~\ref{thm:charnanonym} 
add a restriction on regularity of the group cardinalities, and maximize $n$, what we get are exactly the transversal designs. 
\begin{theorem}
In a combinatorial $(v,b,r,k)$-configuration $C$ with $n$-anonymous neighborhoods and anonymity partition $G=\{g_i\}_{i=1}^m$ and $|g_i|=n$ for all $i\in [1,\dots,m]$, 
we have that
\begin{center}$r=n$ if and only if $m=k$.\end{center} 
In this case $C$ is a transversal design $TD(k,n)$ and $v=kn$, $b=n^2$. 
\end{theorem}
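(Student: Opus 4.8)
The plan is to combine the already-established characterization in Proposition~\ref{thm:charnanonym} with the extra regularity hypothesis $|g_i| = n$ for all $i$ and a double-counting of point-line incidences. First I would record the setup: $C = (P,L,I)$ is a combinatorial $(v,b,r,k)$-configuration whose ``same neighborhood'' equivalence relation partitions $P$ into $m$ groups $g_1,\dots,g_m$, each of size exactly $n$, so $v = mn$. By Proposition~\ref{thm:charnanonym} we already know $r \geq n$ and $m \geq k$. The goal is to show that the two inequalities $r \geq n$ and $m \geq k$ are simultaneously equalities, i.e.\ that forcing either one to be an equality forces the other.

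For the direction $m = k \Rightarrow r = n$: each line meets exactly $k = m$ distinct groups, and since no line contains two points of the same group, each line contains exactly one point from each group — hence each line is a transversal of the partition $G$. Now fix a group $g$ and a point $q \notin g$. Every point $p' \in g$ is collinear with $q$ (since $g$ is an equivalence class and at least one point of $g$, being in a configuration with line size $k\geq 2$, is collinear with $q$; more directly, since all points of $g$ share the neighborhood, $q$ lies in $N(p')$ for one $p'\in g$ iff for all). So the $n$ lines through the points of $g$ that also pass through $q$ are distinct (they hit $g$ in distinct points) and account for $n$ of the $r$ lines through $q$; conversely every line through $q$ meets $g$ in exactly one point, so there are exactly $n$ lines through $q$, giving $r = n$.

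For the direction $r = n \Rightarrow m = k$: I would count incidences between lines and groups. Each of the $b$ lines meets some number of groups; by the argument in Proposition~\ref{thm:charnanonym} a line meets exactly $k$ groups (it has $k$ points in $k$ distinct groups). On the other hand, fix a group $g$; the lines meeting $g$ are exactly the lines through points of $g$, and since $r = n$ and the $n$ points of $g$ are pairwise non-collinear, these are $rn = n^2$ distinct lines (no line hits $g$ twice). Double counting pairs (line, group it meets) gives $bk = \sum_{g} \#\{\text{lines meeting } g\} = m n^2$. Combined with the standard configuration identity $vr = bk$ and $v = mn$, we get $mn\cdot n = bk = mn^2$, which is consistent but not yet conclusive; so instead I would use $vr = bk$ directly: $v r = mn\cdot n = mn^2$ and $bk$ with $b = mn^2/k$... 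The cleaner route: from $bk = mn^2$ and $vr = bk$ with $v = mn, r = n$ we get $mn\cdot n = mn^2$, a tautology. Hence I would instead argue locally as in the other direction: with $r = n$, take a point $p$ and a point $q \in g$ collinear with it where $g$ is $p$'s group's… no — take $p$ with group $g_p$; the $r = n$ lines through $p$ partition $N(p)$ into $r$ blocks of size $k-1$, so $|N(p)| = n(k-1)$, and since $N(p)$ is the union of all groups other than those… Actually $N(p)$ consists of whole groups (if $q \in N(p)$ and $q' $ is in $q$'s group then $N(q)=N(q')$, and $p \in N(q)$ so $p \in N(q')$, so $q' \in N(p)$). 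Thus $N(p)$ is a union of $(m-1)$ groups minus nothing — precisely $v - n = (m-1)n$ points if $p$'s group is the only one disjoint from $N(p)$, which holds since points of $g_p$ are non-collinear with $p$ and every other group contains a point collinear with $p$ (as $k\geq 2$ and $C$ is connected, or by the transversal-line structure). Hence $n(k-1) = |N(p)| = (m-1)n$, giving $k - 1 = m - 1$, i.e.\ $m = k$.

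Once both directions are done, the structure is forced: each line is a transversal of the partition into $m = k$ groups of size $n$; any two points in distinct groups are collinear (they lie in $N$ of each other's whole group) and, the configuration being a partial linear space, on exactly one line; points in the same group are non-collinear. These are exactly the defining axioms of a transversal design $TD(k,n)$ (with $\lambda = 1$), and the parameter count $v = kn$, $b = v r / k = kn \cdot n / k = n^2$ follows. The main obstacle I anticipate is the bookkeeping in showing ``$N(p)$ is a union of entire groups and misses exactly $p$'s own group'' cleanly — this needs the observation that $N$ is constant on groups plus the fact that in a $(v,b,r,k)$-configuration with $k \geq 2$ every group other than $p$'s own contains a point collinear with $p$, which in turn is easiest to see once one knows lines are transversals; so the two implications should be proved in the order $m=k \Rightarrow r=n$ first (establishing the transversal-line picture), then $r=n \Rightarrow m=k$ reusing it.
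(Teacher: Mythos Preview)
Your argument for $m = k \Rightarrow r = n$ is essentially the paper's, only more explicit: once $m = k$, each line is a transversal of the partition, so for any fixed group $g$ and any point $q \notin g$, the lines through $q$ are in bijection with the points of $g$, giving $r = n$.

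The direction $r = n \Rightarrow m = k$ has a genuine gap. You correctly observe that $N(p)$ is a union of entire groups, but the key step---that $N(p)$ equals all of $P \setminus g_p$---is not justified. Of your two proposed justifications, invoking ``the transversal-line structure'' from the other implication is circular: that structure was established under the hypothesis $m = k$, which is precisely what you are trying to prove here. And bare connectedness together with $k \geq 2$ is not enough as stated: connectedness gives a path from $p$ to any other point, not that every other group lies at distance one from $g_p$. (Example~\ref{ex:3an} in the paper, where $r > n$, shows that $N(p)$ can miss many groups other than $g_p$; the hypothesis $r=n$ is doing real work.)

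Connectedness \emph{is} the right ingredient, but you need one further observation. Assume $r = n$ and suppose groups $g$ and $g'$ are adjacent (points of one collinear with points of the other). For $p \in g$, each of the $n$ points of $g'$ lies on a distinct one of the $n = r$ lines through $p$, so \emph{every} line through $p$ meets $g'$. Consequently, if $g \sim g'$ and $g' \sim g''$, then any line through a point of $g'$ meets both $g$ and $g''$, forcing $g \sim g''$ (or $g = g''$). Adjacency among groups is therefore transitive; by connectedness all groups are mutually adjacent, and $N(p) = P \setminus g_p$, whence $n(k-1) = (m-1)n$ and $m = k$. The paper's own proof of this direction is a single sentence appealing to connectedness, so once patched your write-up is actually more complete than the original.
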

\begin{proof}


Since the configuration is connected, if $r=n$, then necessarily $k=m$. 
On the other hand, if $k=m$, then necessarily $r=n$, since if we fix one part $g\in G$ and a point $p\in g$, 
then a line through $g$ has $k$ points through $k=m$ distinct parts $g\in G$, so the line have one point in every part in $G$. 
For any part $g'\in G$ different from $g$ there is also a total of $n$ lines through $p$. 
Since these lines have one point in every part of $G$, we get  $r=n$. 

A transversal design is a uniform group divisible design in which the number of groups $|G|$ equals the length of the blocks $k$. 
We have seen that an $n$-anonymous combinatorial $(v,b,n,m)$-configuration such that $|g_i|=n$ and $m=k$ satisfy exactly these conditions, 
so it is a transversal design $TD(k,n)$.
\end{proof}


\section{Combinatorial configurations with n-anonymous closed neighborhoods}
\label{sec:nanonymousclosedneighborhoods}
In Section~\ref{sec:nconfnan} we saw that combinatorial configurations with $n$-anonymous closed neighborhoods are interesting for P2P UPIR 2. 
By now, the reader is already familiar with the most important example of combinatorial configurations with $n$-anonymous closed neighborhoods. 
\begin{proposition}
\label{prop:lin_anon}
A linear space on $n$ points has $n$-anonymous closed neighborhoods.
\end{proposition}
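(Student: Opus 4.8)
The claim is that a linear space on $n$ points has $n$-anonymous closed neighborhoods, i.e., every closed neighborhood $CN(p)$ is the closed neighborhood of at least $n$ points. The plan is to exploit the defining property of a linear space: every two distinct points lie on a common line, hence are collinear. The key observation is that this makes $CN(p)$ maximal and, in fact, independent of $p$.

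First I would recall, as established earlier in the excerpt (see the discussion preceding Proposition~\ref{prop:def1}, and the remark that a linear space is the unique combinatorial configuration with $P = N(p)\cup\{p\} = CN(p)$ for all $p$), that in a linear space every pair of points is collinear. Therefore, for any point $p\in P$, the closed neighborhood $CN(p)$ — the set of points collinear with $p$, including $p$ itself — is the entire point set $P$. Since this holds for every point, we get $CN(p) = P = CN(q)$ for all $p,q\in P$.

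Consequently, the closed neighborhood $CN(p)$ of any point $p$ coincides with the closed neighborhood of all $|P| = n$ points of the configuration. In particular, $CN(p)$ is the closed neighborhood of at least $n$ points, so the linear space has $n$-anonymous closed neighborhoods, which is what we wanted to show.

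There is essentially no obstacle here: the argument is immediate once one uses the definition of a linear space. The only point worth making explicit is that ``linear space on $n$ points'' means $|P| = n$, so that the trivial anonymity count of ``all points share the same closed neighborhood $P$'' delivers exactly the parameter $n$ claimed in the statement. I would keep the write-up to a couple of sentences.
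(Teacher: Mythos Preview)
Your proposal is correct and follows exactly the same approach as the paper: both arguments rest on the single observation that in a linear space every pair of points is collinear, so $CN(p)=P$ for all $p$, giving $n$-anonymous closed neighborhoods immediately. The paper's proof is simply the one-line ``In a linear space all points are collinear,'' and yours is a spelled-out version of that.
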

\begin{proof}
In a linear space all points are collinear.
\end{proof}

\subsection{Combinatorial configurations with n-anonymous closed neighborhoods from combinatorial configurations with n-anonymous neighborhoods}
\label{sec:suboptimalnanonym}
After all, there is not much difference between the definition of $n$-anonymous neighborhoods and the definition of $n$-anonymous closed neighborhoods. 
The next Theorem~\ref{thm:nanonym(III)from(I)} shows that we can use combinatorial configurations with the former property to construct combinatorial configurations with the latter property. 
\begin{theorem}\label{thm:nanonym(III)from(I)}
Let $C$ be a combinatorial $(v,b,r,k)$-configuration with $n$-anonymous neighborhoods such that $k|n$. 
Then there also exists a combinatorial $(v,b+n,r+1,k)$-configuration $C'$ with $n$-anonymous closed neighborhoods.
\end{theorem}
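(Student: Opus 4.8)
The plan is to obtain $C'$ from $C$ by adjoining $n$ new lines that lie entirely inside the anonymity classes of $C$.

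First I would pin down the structure of $C$. Counting incident point--line pairs in $C$ and in the target configuration $C'$ gives $vr=bk$ and $v(r+1)=(b+n)k$, hence $v=nk$. By Proposition~\ref{thm:charnanonym}, the relation ``has the same neighborhood'' partitions $P$ into classes $g_1,\dots,g_m$ that are pairwise non-collinear, with $|g_i|\ge n$, $m\ge k$ and $r\ge n$. Since $nk=v=\sum_i|g_i|\ge mn\ge kn$, all of these inequalities are equalities: $m=k$ and $|g_i|=n$ for every $i$. The preceding theorem then forces $C$ to be a transversal design $TD(k,n)$ with $r=n$ and $b=n^2$, and $N(p)=P\setminus g_i$ whenever $p\in g_i$.

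Next I would construct $C'$. Because $k\mid n$, fix for each class $g_i$ a partition of $g_i$ into $n/k$ blocks of $k$ points, and let $L'$ be the union of all these blocks, so $|L'|=m\cdot(n/k)=n$. Set $C'=(P,L\cup L',I')$ with the obvious incidence. I would then verify that $C'$ is a combinatorial configuration with the announced parameters: a block of $L'$ joins only points of a common class, such pairs were non-collinear in $C$ and lie in at most one block of $L'$, while pairs from distinct classes are still covered only by $L$, so $C'$ is a partial linear space; every line of $L\cup L'$ has $k$ points, so $C'$ is $k$-uniform; every point lies on its $r$ original lines together with exactly one block of $L'$, so $C'$ is $(r+1)$-regular, and connectedness is inherited from $C$. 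Counting lines gives $b+n$, so $C'$ is a combinatorial $(v,b+n,r+1,k)$-configuration, concretely a $(nk,\,n^2+n,\,n+1,\,k)$-configuration.

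Finally I would examine the closed neighborhoods of $C'$. For $p$ in a block $B\subseteq g_i$ one computes $N'(p)=(P\setminus g_i)\cup(B\setminus\{p\})$, hence $CN'(p)=(P\setminus g_i)\cup B=P\setminus(g_i\setminus B)$, which depends only on the pair $(g_i,B)$; in particular all $k$ points of $B$ share the same closed neighborhood in $C'$. The step I expect to be the main obstacle is to upgrade this to the full claim, namely that every closed neighborhood of $C'$ is realized by at least $n$ of the points, rather than merely by the $k$ points of a single block. This is precisely where the divisibility hypothesis $k\mid n$ and the transversal structure of $C$ have to be used, to arrange the block partitions of the classes so that the closed neighborhoods coincide in large enough families; once this is secured, the anonymity conclusion follows, while the parameter bookkeeping of the earlier steps is entirely routine.
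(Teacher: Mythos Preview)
Your construction is exactly the paper's: partition each anonymity class into $k$-subsets and adjoin these as new lines. Your derivation that the target parameters force $v=nk$, and hence that $C$ must be a transversal design $TD(k,n)$, is more careful than the paper, which simply writes ``adding $k\frac{n}{k}=n$ new lines'' without justifying that there are $k$ classes each of size~$n$.

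The obstacle you flag in your last paragraph is genuine and cannot be removed. Your own formula $CN'(p)=P\setminus(g_i\setminus B)$ already settles it: for $k<n$ the set $g_i\setminus B$ is nonempty, and distinct blocks $B$ of the partition (whether in the same class or in different classes) yield distinct sets $g_i\setminus B$, hence distinct closed neighborhoods. No arrangement of the block partitions can make more than the $k$ points of a single new line share a closed neighborhood.

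The paper's proof does not do better. It observes that the $k$ points on each new line share their closed neighborhood and then writes ``This concludes the proof.'' Immediately after the theorem the authors themselves remark that ``in general the combinatorial $(r,k)$-configuration constructed in Theorem~\ref{thm:nanonym(III)from(I)} is $k$-anonymous but not $m$-anonymous for $m>k$.'' So the ``$n$'' in the phrase ``$n$-anonymous closed neighborhoods'' in the statement is a slip for ``$k$''; what both you and the paper actually establish is a configuration with $k$-anonymous closed neighborhoods. The hypothesis $k\mid n$ is used only so that each size-$n$ class can be partitioned into $k$-subsets, not to raise the anonymity parameter.
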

\begin{proof}
Let $C$ be as stated above. Then every point shares neighborhood with exactly $n$ more points. 
Theorem~\ref{thm:charnanonym} implies that in $C$ there is a partition $G$ of the point set so that points in the same partition are the points with the same neighborhood. 
This implies that points in the same partition are not collinear. 
Define $C'$ by adding $k\frac{n}{k}=n$ new lines, 
so that every new line contains only points from the same part of $G$. 
Let $A$ be a set of points with the same neighborhood in $C$. 
For any $p\in A$ there are $k-1$ other points $p_1,\ldots,p_{k-1}$ in $A$ 
collinear with $p$ by one of the new lines, such that $CN(p)=CN(p_i)$ for $i=1,\ldots,k-1$. 
This concludes the proof.
\end{proof}
As a corollary of Theorem~\ref{thm:nanonym(III)from(I)} we get that an affine plane of order $k$ is a combinatorial configuration with $k$-anonymous closed neighborhoods.  
Just apply the construction in the proof of Theorem~\ref{thm:nanonym(III)from(I)} to  a transversal design $TD(k,k)$. 

But an affine plane of order $k$ is a linear space on $v=k^2$ points, 
so we already know from Proposition~\ref{prop:lin_anon} that it is a $k^2$-anonymous combinatorial configuration for P2P UPIR 2. 
Indeed, $n$-anonymity implies $m$-anonymity for all $m\leq n$. 
Observe though that in general the combinatorial $(r,k)$-configuration constructed in Theorem~\ref{thm:nanonym(III)from(I)} is $k$-anonymous but not $m$-anonymous for $m>k$. 

Not all combinatorial configurations with $n$-anonymous closed neighborhoods can be obtained using the construction in Theorem~\ref{thm:nanonym(III)from(I)}. 
For example, we can not use this method to construct a finite projective plane. 

\subsection{Optimal configurations for P2P UPIR 2}

The P2P UPIR 2 was designed to provide $n$-confusion with linear spaces. We have the following result. 
\begin{theorem}\label{thm:optpp}
A regular linear space on $v$ points provides $n$-confusing P2P UPIR 2 with $n=v$. 
This is optimal. 
\end{theorem}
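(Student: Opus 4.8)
\textbf{Proof plan for Theorem~\ref{thm:optpp}.}

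The plan is to split the statement into two parts: first, that a regular linear space on $v$ points yields $v$-confusing P2P UPIR 2, and second, that no combinatorial configuration used for P2P UPIR 2 can do better than $|P|$-confusion. For the first part, I would invoke Proposition~\ref{prop:lin_anon}: a linear space on $v$ points has $v$-anonymous closed neighborhoods, since all $v$ points are collinear, so $CN(p) = P$ for every point $p$. Then Proposition~\ref{prop:p2pnconf} applies directly: P2P UPIR 2 implemented with a combinatorial configuration having $n$-anonymous closed neighborhoods provides $n$-confusion, hence here we get $v$-confusion. One should double-check that a regular linear space on $v$ points is indeed a combinatorial configuration in the sense required (it is $r$-regular by hypothesis and $k$-uniform only if all lines have equal size — but actually the relevant input to Proposition~\ref{prop:p2pnconf} is the closed-neighborhood anonymity, which holds regardless, so I would note that the argument of Proposition~\ref{prop:p2pnconf} goes through as long as $CN(p)=P$ for all $p$).

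For the optimality part, I would argue as in the discussion preceding the theorem in Section~\ref{sec:privP2PUPIR}: the confusion on the owner of a query (or linked query sequence) is the cardinality of the anonymity set, and this anonymity set is always a subset of the user community $P$. As observed in Section~\ref{sec:attack_altres}, the server can determine which users belong to the community (their apparent query profiles are mutually similar and differ from those of non-members), so the anonymity set cannot exceed $P$ in size. Hence $n \le |P| = v$ for any configuration, and the linear space achieves this bound with equality. This gives optimality.

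The main obstacle I anticipate is being precise about what "optimal" quantifies over: optimal among \emph{all} combinatorial configurations on $v$ points for use with P2P UPIR 2, with the upper bound $n \le v$ being the content of "optimal." I would make sure to state clearly that the comparison is over configurations with the same number $v$ of users, so that "$n = v$" is genuinely the maximum. A secondary subtlety is that Proposition~\ref{prop:p2pnconf} is stated under the assumption that there are no linked query sequences containing quasi-identifiers; I would carry that standing assumption over (it is in force throughout Section~\ref{sec:privP2PUPIR}) and perhaps remark that without it no protocol can provide $n$-confusion with $n>1$, as Example~\ref{ex:nonanonymity} shows, so the assumption is necessary for any positive statement. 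Everything else is a direct citation of earlier results, so the proof is short.

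\begin{proof}
By Proposition~\ref{prop:lin_anon}, a regular linear space on $v$ points has $v$-anonymous closed neighborhoods; indeed every pair of points is collinear, so $CN(p)=P$ for all $p\in P$. Applying Proposition~\ref{prop:p2pnconf} to P2P UPIR 2 with this configuration, we conclude that the protocol provides $n$-confusion with $n=v$ (under the standing assumption that $AP_t(P)$ contains no sequences of linked queries with quasi-identifiers).

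It remains to see that this is optimal, that is, that no combinatorial configuration on $v$ users can provide $n$-confusion with $n>v$ for P2P UPIR 2. The anonymity set of the owner of a query, or of a sequence of linked queries, is by definition a set of users, hence a subset of the community $P$. Moreover, as discussed in Section~\ref{sec:attack_altres}, the server can identify which users belong to $P$, since the members of $P$ have mutually similar apparent query profiles which differ from the apparent query profiles of users outside $P$. Therefore the anonymity set is contained in $P$, so its cardinality is at most $|P|=v$. Consequently $n\leq v$ for any configuration, and the regular linear space attains this bound.
\end{proof}
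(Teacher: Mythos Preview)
Your proof is correct and follows essentially the same approach as the paper's own proof, which is extremely terse: the paper simply says ``It is immediate that a linear space provides $n$-confusing P2P UPIR 2 with $n=v$'' and that ``the confusion cannot be larger than the total number of points $v$ in the configuration.'' You have just made explicit the citations to Propositions~\ref{prop:lin_anon} and~\ref{prop:p2pnconf} that underlie the word ``immediate,'' and spelled out the trivial upper bound $n\le |P|$; this is a faithful (and arguably more careful) elaboration of the paper's argument rather than a different route.
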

\begin{proof}
It is immediate that a linear space provides $n$-confusing P2P UPIR 2 with $n=v$. 
Since the confusion can not be larger than the total number of points $v$ in the configuration, this is optimal. 
%
%
\end{proof}

A linear space is a $(v,k,1)$-BIBD. More generally, in a $(v,k,\lambda)$-BIBD any two points are connected by $\lambda\geq 1$ lines, so also in this case we have optimal $n$-confusing P2P UPIR 2. 
However, then the BIBD is not a combinatorial configuration. 
In this article we have provided reasons that justify the use of combinatorial configurations for P2P UPIR. 
As was observed in \cite{SwansonStinson}, other incidence structures are also interesting, in particular if it is assumed that colluding users can communicate over channels that are exterior to the protocol.

\section{Conclusions}
We have presented two different P2P UPIR protocols, P2P UPIR 1 in which the users do not self-submit and P2P UPIR 2 in which they do. 
Then we described an attack on P2P UPIR 1, based on unique neighborhoods, 
and adjusted the self-submission for P2P UPIR 2 in order to avoid neighborhood attacks on linear spaces. 
We also showed that P2P UPIR 2 is still vulnerable to closed neighborhood attacks, if the closed neighborhoods in the combinatorial configurations are unique. 
We gave examples of combinatorial configurations with unique neighborhoods and unique closed neighborhoods.  

Then we presented the combinatorial configurations with $k$-anonymous neighborhoods and $k$-anonymous closed neighborhoods, respectively. 
We characterized, as $n$-confusion, the privacy provided by a P2P UPIR protocol that uses one of the combinatorial configurations from these families. 
Finally we studied the combinatorial configurations with $k$-anonymous neighborhoods and $k$-anonymous closed neighborhoods. 
We distinguished the transversal designs and the linear spaces as optimal configurations for P2P UPIR from these two families, respectively. 

We want to point out that there are two trivial ways to connect communication spaces and users; the all-to-all and the one-to-all distributions. 
As combinatorial structures both can be interpreted as degenerated linear spaces, since every pair of points (users) share exactly one line (communication space). 
Used in the P2P UPIR 2 protocol, they provide the same anonymity in front of the server as does a linear space with the same number of points. 
However, with respect to other users, there is no anonymity in the all-to-all distribution and confidentiality is lost in the all-to-one distribution.  
Consequently, the reason why non-degenerated combinatorial configurations are interesting for P2P UPIR 2, is because they offer some anonymity and confidentiality with respect to the other users in the community.

\section*{Acknowledgements}
The authors want to thank Maria Bras-Amor\'os, Douglas R. Stinson, Colleen Swanson, Vicen\c{c} Torra and the anonymous referees for useful discussions and suggestions. 
Partial support by the Spanish MEC projects ARES (CONSOLIDER ~INGENIO ~2010 ~~CSD2007-00004) and RIPUP (TIN2009-11689) is acknowledged. The authors are with the UNESCO Chair in Data Privacy, but their views do not necessarily reflect those of UNESCO nor commit that organization. 

\bibliographystyle{plain}

\end{document}